\def\RR{{\mathbb R}}
\def\CC{{\mathbb C}}
\def\ZZ{{\mathbb Z}}
\def\A{{\mathcal A}}
\def\B{{\mathcal B}}
\def\C{{\mathcal C}}
\def\H{{\mathcal H}}
\def\I{{\mathcal I}}
\def\K{{\mathcal K}}
\def\M{{\mathcal M}}
\def\N{{\mathcal N}}
\def\P{{\mathcal P}}
\def\R{{\mathcal R}}
\def\a{\alpha}
\def\k{\kappa}
\def\l{\lambda}
\def\s{\sigma}
\def\Ad{{\hbox{\rm Ad\,}}}
\def\dim{{\hbox{dim}\,}}
\def\1{{\mathbbm 1}}
\def\diff{{\rm Diff}}
\def\diffs1{\diff(S^1)}
\def\mob{{\rm M\ddot{o}b}}
\def\supp{{\rm supp\,}}
\def\psl2r{{\rm PSL}(2,\RR)}
\def\sl2r{{\rm SL}(2,\RR)}
\def\su11{{\rm SU}(1,1)}
\def\2dmob{{\overline{\psl2r}\times\overline{\psl2r}}}
\def\<{\langle}
\def\>{\rangle}
\def\poincare{{\P^\uparrow_+}}
\def\botimes{\mathbin{\bar{\otimes}}}
\DeclareMathOperator{\Tr}{Tr}
\newcommand{\tout}{\mathrm{out}}
\newcommand{\tin}{\mathrm{in}}
\newtheorem{theorem}{Theorem}[section]
\newtheorem{proposition}[theorem]{Proposition}
\newtheorem{lemma}[theorem]{Lemma}
\theoremstyle{remark}
\newtheorem{remark}[theorem]{Remark}
\title{Free products in AQFT}
\date{} 
\author{
{\bf Roberto Longo\footnote{Supported in part by the ERC Advanced Grant 669240 QUEST ``Quantum Algebraic Structures and Models'',
PRIN-MIUR, GNAMPA-INdAM and Alexander von Humboldt Foundation.},
Yoh Tanimoto\footnote{Supported by Programma Rita Levi Montalcini of the Italian Ministry of Education, University and Research.}} \\
   Dipartimento di Matematica, Universit\`a di Roma Tor Vergata\\
   Via della Ricerca Scientifica 1, I-00133 Roma, Italy\\
   email: {\tt longo@mat.uniroma2.it, hoyt@mat.uniroma2.it} 
      \vspace{0.5cm}\\
 {\bf Yoshimichi Ueda\footnote{Supported in part by the Grant-in-Aid for Challenging Exploratory Research 16K13762.}} \\
  Graduate School of Mathematics, Kyushu University \\
  Nishi-ku Motooka 744, Fukuoka, 810-8560, Japan \\
   email: {\tt ueda@math.kyushu-u.ac.jp}
}
\begin{document}
\maketitle

\begin{abstract}
We apply the free product construction to various local algebras in algebraic quantum field theory.

If we take the free product of infinitely many identical half-sided modular inclusions with ergodic canonical endomorphism,
we obtain a half-sided modular inclusion with ergodic canonical endomorphism and trivial relative commutant. 
On the other hand, if we take M\"obius covariant nets with trace class property,
we are able to construct an inclusion of free product von Neumann algebras with large relative commutant,
by considering either a finite family of identical inclusions or an infinite family of inequivalent inclusions.
In two dimensional spacetime, we construct Borchers triples with trivial relative commutant
by taking free products of infinitely many, identical Borchers triples.
Free products of finitely many Borchers triples are possibly associated with Haag-Kastler net having S-matrix which is
nontrivial and non asymptotically complete, yet the nontriviality of double cone algebras remains open.
\end{abstract}

\section{Introduction}\label{introduction}
The whole model of quantum field theory (QFT), in the framework of algebraic QFT,
can be reconstructed from just a combination of few local algebras
and the vacuum state.
This was a striking discovery by Wiesbrock and others \cite{Wiesbrock93-1, GLW98, Wiesbrock98, KW01, AZ05}.
In particular, a half-sided modular inclusion (HSMI) of two half-line algebras coming from
the (strongly additive) chiral component of a two-dimensional conformal field theory completely remembers
the structure of QFT on the lightray. It opened a hope to construct new models of QFT which is in general
very difficult, and since then, HSMIs have been used to construct possibly new chiral components
\cite{Longo01, CLTW12-2} as well as to determine the structure of massless two-dimensional QFT \cite{Tanimoto12-1, Tanimoto12-2}.
Actually, a HSMI is not equivalent to a chiral component (M\"obius covariant net), and in order to carry out the reconstruction,
one needs an additional condition: the relative commutant of the algebras must be also large.
The contrary case does happen, for example when one considers wedge inclusions of (two-dimensional) fermionic QFT \cite{BLM11}
or bosonic interacting QFT \cite{BT15, Tanimoto14-1}. Yet, even if the relative commutant is not large enough
(the vacuum vector is not cyclic for it), if it is just nontrivial, one can still construct a M\"obius covariant net
on a smaller subspace \cite{BLM11}. Constructing HSMIs and determining the relative commutant are not easy tasks,
and in all the known cases, the relative commutant have been nontrivial.
In the present work, we provide examples of HSMIs with trivial relative commutant, by exploiting the techniques
of free product von Neumann algebras.

The study of free products of (possibly type III) von Neumann algebras dates back to the birth of free probability theory initiated by Voiculescu (see e.g.\! \cite{VDN92}), and their fundamental properties like central decomposition and type classification in the not necessarily type II$_1$ setting were recently established in full generality (see \cite{Ueda11-1}).
From a family $\{(\M_\k, \Omega_\k)\}_{\k \in K}$ of von Neumann algebras equipped with vector states,
one constructs $(\M, \Omega)$ where $\M$ is generated by the faithful images of $\M_\k$ and
these images are freely independent, or roughly speaking, they do not commute in the highest degree but still have some ``relations'' provided by $\Omega$.
One can actually start with a family of inclusions $(\N_\k \subset \M_\k, \Omega_\k)$
and obtains an inclusion $\N \subset \M$ of free product von Neumann algebras.
We will prove that if $(\N_\k \subset \M_\k, \Omega_\k)$ is an {\it infinite} family of {\it identical} HSMIs,
the resulting inclusion $\N \subset \M$ must have trivial relative commutant.

Before this work, no serious investigation of inclusions of free product von Neumann algebras has been made without assuming the existence of faithful normal conditional expectations.
We provide an illustrative example whereby the relative commutant is nontrivial by exploiting the techniques of AQFT. This is quite an interesting phenomenon, because it seems, at first sight, contrary to the highest degree of non-commutativity of the free product construction. 
If $\{(\{\M_\k(I)\}_{I\subset S^1}, \Omega_\k)\}_{\k\in K}$ is a family of (local) M\"obius covariant nets with vacuum vectors,
we can construct a family of von Neumann algebras satisfying the axioms of M\"obius covariant net except locality.
Then, by taking $K$ to be finite or considering an appropriate family of local algebras with infinite $K$,
we obtain inclusions $\N \subset \M$ of free product von Neumann algebras where $\N'\cap \M$ is a type III$_1$ factor.
The proof utilizes the $L^2$-nuclearity condition \cite{BDL90}, a physical
condition saying that the state space with small energy is not too large. 
Although the $L^2$-nuclearity condition indeed guarantees the nontriviality of the relative commutant, we do not find any explicit, nontrivial elements in it.

We also apply a similar construction to two-dimensional quantum field theory. A two-dimensional Haag-Kastler net can be
reconstructed from a single von Neumann algebra, a vacuum state and a representation of translations (a Borchers triple).
One can again consider the free product of Borchers triples.
Here, if the index set is infinite and all the given triples are identical, we obtain a Borchers triple with trivial relative commutant.
On the other hand, if the index set is finite, we are not able to determine the relative commutant.
Yet, one can define and compute the two-particle scattering S-matrix, and it turns out to be nontrivial
and not asymptotically complete. This case is of particular interest, because it might give an example of
Haag-Kastler net with minimal size, only above which the algebras of local observables are nontrivial.

This paper is organized as follows.
In Section \ref{preliminaries} we review the fundamental concepts of algebraic QFT, half-sided modular inclusions
and free products of von Neumann algebras. In Section \ref{hsmi}, we construct a HSMI with trivial relative commutant by
taking the free product of infinitely many copies of a standard HSMI. Section \ref{moebius} introduces
the concept of free product M\"obius covariant nets, and this concept enables us to give an example of inclusion of free product
von Neumann algebras with large relative commutant.
Finally, in Section \ref{borchers} we consider free products of two-dimensional Borchers triples.
When the index set $K$ is infinite and the family consists of identical ones, the relative commutant is shown to be trivial.
When $K$ is finite, it remains open, meaning that it might give a two-dimensional Haag-Kastler net
which should not have the wedge split property for small distance.

\section{Preliminaries}\label{preliminaries}
\subsection{Algebraic QFT}\label{AQFT}

\subsubsection{One-dimensional chiral components}\label{chiral}

Let $S^1$ denote the unit circle and let $\mathcal{I}$ be the collection of proper intervals
(i.e., open, connected, non-empty, non-dense subsets) $I \subset S^1$. For any $I \in \mathcal{I}$,
we denote by $I'$ the interior of the complement of $I$ in $S^1$.
The M\"obius group $\mob := \mathrm{SL}(2,\mathbb{R})/\mathbb{Z}_2 \cong \mathrm{PSU}(1,1)$
acts on $S^1$ by diffeomorphisms in a natural fashion. Let $R : \mathbb{R}/2\pi\ZZ \to \mob$ denote the rotation subgroup.      
Let $\mathcal{A} = (\mathcal{A}(I))_{I \in \mathcal{I}}$ be a
{\bf M\"obius covariant net} (or a M\"obius covariant precosheaf) on $S^1$ in the sense of \cite[Section 2]{DLR01}.
Namely, all the $\mathcal{A}(I)$, $I \in \mathcal{I}$, are von Neumann algebras on a fixed Hilbert space
$\mathcal{H}$ with the following properties: 
\begin{itemize}
\item[1.] Isotony: $I_1 \subset I_2$ $\Rightarrow$ $\mathcal{A}(I_1) \subset \mathcal{A}(I_2)$. 
\item[2.] M\"obius covariance: There exists a unitary representation $U: G \curvearrowright \mathcal{H}$ such that
$U(g)\mathcal{A}(I)U(g)^* = \mathcal{A}(gI)$ for every $g \in G$ and $I \in \mathcal{I}$. 
\item[3.] Positivity of the energy: The generator $L_0$ of the one parameter unitary group $\theta \mapsto U(R(\theta)) = e^{i\theta L_0}$,
called the {\bf conformal Hamiltonian}, is positive. 
\item[4.] Existence of the vacuum: There exists a unique (up to a scalar) unit vector $\Omega \in \mathcal{H}$,
called the {\bf vacuum vector}, such that $U(g)\Omega = \Omega$ for all $g \in \mob$
and that $\Omega$ is cyclic for $\bigvee_{I \in \mathcal{I}} \mathcal{A}(I)$ and separating
for $\bigwedge_{I \in \mathcal{I}} \mathcal{A}(I)$.   
\end{itemize}
A M\"obius covariant net $\mathcal{A}$ is said to be {\bf local} if the following property holds:  
\begin{itemize}
\item[5.] Locality: $I_1 \cap I_2 = \emptyset$ $\Rightarrow$ $\mathcal{A}(I_1) \subseteq \mathcal{A}(I_2)'$. 
\end{itemize}
Algebras $\A(I)$ are of type III$_1$, unless the Hilbert space is one-dimensional (we call this case {\bf trivial})
\cite[Proposition 2.4(ii)]{DLR01}. 
{\bf The Bisognano-Wichmann property} 
$\Delta_I^{it} := \Delta_{\A(I)}^{it} = U(\Lambda_I(-2\pi t))$ holds,
where $\Delta_{\A(I)}$ is the modular operator of $\A(I)$ with respect to $\Omega$ and
$\Lambda_I$ is the dilation associated with $I$, see \cite[Proposition 2.2(ii)]{DLR01}. 
In the following, we always stress whether each statement holds with or without locality,
so that no confusion arises.

In a nice situation, the whole M\"obius covariant net can be reconstructed from a pair of von Neumann algebras.
Let us consider another additional property.
\begin{itemize}
\item[6.] Strong additivity: For $I \in \I$ and $I_1, I_2$ the two intervals obtained by removing one point from $I$,
it holds that $\A(I) = \mathcal{A}(I_1) \vee \mathcal{A}(I_2)$. 
\end{itemize}

Let $\H$ be a Hilbert space, $\M$ a von Neumann algebra on $\H$, $T$ a unitary representation of $\RR$ on $\H$ with positive spectrum and
$\Omega$ a vector in $\H$. We say $(\M, T, \Omega)$ is a {\bf one-dimensional Borchers triple} on $\H$ if
\begin{itemize}
 \item $\Omega$ is the unique (up to a scalar) invariant vector under $T(a)$,
 \item $\Ad T(a) (\M) \subset \M$ for $a\in \RR_+$,
 \item $\Omega$ is cyclic and separating for $\M$.
\end{itemize}
This notion is related with the following:
a {\bf half-sided modular inclusion} (HSMI) \footnote{The definition here is called $+$HSMI in \cite{Wiesbrock98}.
The other case, the inclusion for $t \le 0$, is called $-$HSMI and can be treated in a completely parallel way.}
is an inclusion $\N \subset \M$ of von Neumann algebras and
a vector $\Omega$ which is cyclic and separating for both of $\N, \M$ such that
$\s_t^\M(\N) \subset \N$ for $t \ge 0$, where $\s_t^\M$ is the modular automorphism group of $\M$ with respect to $\Omega$.

If $(\M, T, \Omega)$ is a one-dimensional Borchers triple, then $\N = \Ad T(1)(\M) \subset \M$ is
a half-sided modular inclusion by the Borchers theorem \cite{Borchers92, Florig98}.
The uniqueness of $\Omega$ implies the ergodicity $\bigcap_{t \ge 0}\s_t^\M(\N) = \CC\1$.
Conversely, if $(\N \subset \M, \Omega)$ is a half-sided modular inclusion, then
one can construct a representation $T(a)$ of $\RR$ with positive spectrum such that $\N = \Ad T(1)(\M)$
and $T(2) = J_{\N}J_{\M}$, where $J_{\M}$ and $J_{\M}$ are the modular conjugations of
$\M$ and $\N$ with respect to $\Omega$, respectively
\cite{Wiesbrock93-1, AZ05}. Furthermore, $T(a)$ together with the modular group $\Delta_{\mathcal{M}}^{it}$ generates
a representation of the translation-dilation group.
We remark that $\s_t^\M$ is ergodic on $\N$ if and only if the canonical endomorphism
$\Ad \Gamma$ is ergodic on $\N$, where $\Gamma = J_{\N}J_{\M} = T(2)$, and in this case, the uniqueness of $\Omega$ follows,
giving rise to a one-dimensional Borchers triple.

We say that a HSMI $(\N\subset \M, \Omega)$ is {\bf standard}
if $\Omega$ is cyclic and separating for $\N'\cap \M$.
We also say that a one-dimensional Borchers triple is standard if so is the corresponding half-sided modular inclusion.
Let us identify $S^1$ and the one-point compactification of $\RR$ by the stereographic projection.
There is a one-to-one correspondence between standard half-sided modular inclusions $(\N\subset \M, \Omega)$
and strongly additive local M\"obius covariant nets $\A$ by
$\A((0,\infty)) = \M, \A((1,\infty)) = \N$ \cite[Corollary 1.7]{GLW98}.

Actually, even if $(\N\subset \M, \Omega)$ is not standard, one can still construct a local M\"obius covariant net
on the subspace $\overline{(\N'\cap \M)\Omega}$: indeed, one should just define $\A((0,1)) := \N'\cap \M$
and then $\A((0,\infty))$ by dilation covariance and closure. In this way, one obtains a standard one-dimensional
Borchers triple, hence a strongly additive local M\"obius covariant net (c.f.\! \cite[Lemma 5.1, Theorem 5.2]{BLM11}).

Therefore, it is an important problem to determine whether the relative commutant $\N'\cap \M$ of a HSMI is nontrivial.
Moreover, in all examples constructed so far, the relative commutant was either nontrivial or difficult to identify.
One of the main results in this work is to provide examples of half-sided modular inclusion with
trivial relative commutant.

\subsubsection{Two-dimensional nets}\label{two-dim}
In $(1+1)$-dimensional Minkowski space, a (Poincar\'e-covariant) {\bf Haag-Kastler net} is a family $\{\A(O)\}_{O \subset \RR^2}$
of von Neumann algebras on a fixed Hilbert space $\H$ parametrized by bounded open regions in $\RR^2$ satisfying the following properties:
\begin{enumerate}
\item Isotony: If $O_1 \subset O_2$, then $\mathcal{A}(O_1) \subset \mathcal{A}(O_2)$. 
\item Locality: If $O_1$ and $O_2$ are spacelike separated, then $\mathcal{A}(O_1)$ and $\mathcal{A}(O_2)$ commute. 
\item Poincar\'e covariance: There is a unitary representation $U$ of the proper orthochronous Poincar\'e group $\poincare$
on $\H$ such that $U(g)\mathcal{A}(O)U(g)^* = \mathcal{A}(gO)$ for every $g \in \poincare$ and open region $O \subset \RR^2$. 
\item Positivity of the energy: The restriction of $U$ to the translation subgroup $\RR^2$ has the joint spectrum included
in $\overline V_+ = \{(t_0,t_1) \in \RR^2: t_0 \ge |t_1|\}$. 
\item Existence of the vacuum: There exists a unique (up to scalar) unit vector $\Omega \in \mathcal{H}$,
called the {\bf vacuum vector}, such that $U(g)\Omega = \Omega$ for all $g \in \poincare$ and that
$\Omega$ is cyclic for $\mathcal{A}(O)$ where $O$ is sufficiently large.
\end{enumerate}
\begin{remark}
 Usually one assumes weak additivity and cyclicity of the vacuum for the global algebra $\bigcup_{O\subset \RR^2}\A(O)$,
 and proves the Reeh-Schlieder property $\overline{\A(O)\Omega} = \H$, where $O$ is open.
 Yet, the weak additivity is not necessarily a physical requirement, neither is it known whether it follows from the Wightman axioms.
 In addition, we are also interested in Haag-Kastler nets possibly with minimal size, namely,
 $\Omega$ is not cyclic for $\A(O)$ when $O$ is small.
 Therefore, we include the Reeh-Schlieder property for sufficiently large $O$ to the axioms.
\end{remark}

It is not an easy task to construct a Haag-Kastler net, especially the infinite family of von Neumann algebras $\{\A(O)\}$,
but in some situations it can be reduced to a single von Neumann algebra associated with the wedge $W_{\mathrm R} = \{(a_0,a_1): a_1 > |a_0|\}$.
By a {\bf two-dimensional Borchers triple} 
$({\mathcal M},T,\Omega)$ we mean a triple of a von Neumann algebra ${\mathcal M}$ on ${\mathcal H}$,
a unitary representation $T$ of ${\mathbb R}^2$
with joint spectrum in $\overline V_+$ and a vector $\Omega$ (called the vacuum vector) 
such that
\begin{itemize}
 \item $\Omega$ is the unique (up to scalar) invariant vector under $T(a)$.
 \item ${{\rm Ad\,}} T(a) {\mathcal M} \subset {\mathcal M}$ for $a \in W_{\mathrm R}$.
 \item $\Omega$ is cyclic and separating for $\M$.
\end{itemize}

It is immediate that if $({\mathcal A},U,\Omega)$ is a Poincar\'e covariant Haag-Kastler net,
then the triple $({\mathcal A}(W_{\mathrm R}), U|_{{\mathbb R}^2}, \Omega)$ is a Borchers triple.
Conversely, if $({\mathcal M},T,\Omega)$ is a Borchers triple, one can construct a net as follows.
Any double cone in two-dimensional spacetime is the intersection of two-wedges
$(W_{\mathrm R}+a)\cap (W_{\mathrm L}+b) =: D_{a,b}$, where $W_{\mathrm L}$ is the reflected (left-)wedge.
Then one constructs the net first for double cones by
${\mathcal A}(D_{a,b}) := {{\rm Ad\,}} T(a)({\mathcal M}) \cap {{\rm Ad\,}} T(b)({\mathcal M}')$, and then 
for a general region $O$ by ${\mathcal A}(O) := \left(\bigcup_{D_{a,b}\subset O} {\mathcal A}(D_{a,b})\right)''$.
It is easy to show that this family ${\mathcal A(O)}$ satisfies isotony and locality.
Furthermore, the representation $T$ can be extended by the Tomita-Takesaki theory
to a representation $U$ of the Poincar\'e group such that $({\mathcal A},U)$ is covariant and $\Omega$ is still its fixed vector.
The requirement that $\Omega$ should be cyclic for sufficiently large $O$ needs to be checked in examples.
See \cite[Section 2]{Lechner15}.

\subsection{The split property and nuclearity conditions}

We say that an inclusion $\N \subset \M$ of von Neumann algebras is {\bf split} if there is an intermediate
type I factor $\R$, i.e.\!, $\N \subset \R \subset \M$.
If $\N$ and $\M$ are type III algebras on a separable Hilbert space $\H$,
the split property has an immediate consequence on the relative commutant:
$\N'\cap \M$ is nontrivial. Indeed, by identifying the intermediate type I factor $\R$ with $\B(\K_1)\otimes\CC\1$,
this inclusion is unitarily equivalent to $\N \otimes \CC \subset \B(\K_1)\otimes \M$, where $\M$ and $\N$ are type
III factors, hence we obtain $\N'\cap \M \cong \N'\otimes \M$.

There are several sufficient conditions for the split property in terms of the modular theory.
Let $\Omega$ be a common cyclic and separating vector for $\N$ and $\M$
and $\Delta_\M$ be the modular operator for $\M$ with respect to $\Omega$.
We say that the inclusion $(\N \subset \M, \Omega)$ satisfies the {\bf modular nuclearity} condition
if the map
\[
 \N \ni x \longmapsto \Delta_\M^\frac14 x \Omega
\]
is nuclear. The modular nuclearity condition implies the split property for $\N \subset \M$
\cite[Propositions 1.1, 2.3]{BDL90}.
Furthermore, if $\Delta_\M^\frac14 \Delta_\N^{-\frac14}$ is nuclear, where $\Delta_\N$ is the modular operator for $\N$
with respect to $\Omega$, the inclusion $\N \subset \M$ is said to satisfy the {\bf $L^2$-nuclearity} condition.
The $L^2$-nuclearity implies the modular nuclearity \cite[Proposition 5.3]{BDL07},
hence also the split property.

\paragraph{One-dimensional case.}
Let us turn to conformal nets $(\A, U, \Omega)$.
The half-sided modular inclusion $\A((1,\infty)) \subset \A((0,\infty))$ is never split
(see the argument of \cite[P.292(b)]{Buchholz74}).
Yet, one can consider inclusions $\A(I)\subset \A(\tilde I)$ with $\overline I \subset \tilde I$.

Let us focus on the generator of rotations $L_0$. It has a discrete spectrum, as the $2\pi$-rotation is trivial.
If $\Tr e^{-sL_0} < \infty$ for some $s$, then the \textit{distal split property} holds:
namely, if $I \subset \tilde I$ with ``conformal distance'' $\ell(I,\tilde I) > s$,
$\A(I) \subset \A(\tilde I)$ is split \cite[Corollary 6.4]{BDL07}.

Actually, for most of the examples of \textit{local} M\"obius covariant nets,
a stronger property holds: $\dim \ker (L_0-n)$ grows asymptotically as $e^{\a n^\nu}$ where $0 < \nu < 1$.
In such a case, $e^{-s n}\cdot e^{\a n^\nu}$ tends rapidly to $0$, hence
$\Tr e^{-sL_0} < \infty$ for any $s > 0$.

\paragraph{Two-dimensional case.}

Let $(\M, T, \Omega)$ be a two-dimensional Borchers triple.
There is a sufficient condition for $\Omega$ to be cyclic for double cone algebras $\A(D_{0,a}) = \M\cap \Ad T(a)(\M')$ \cite{BDL90}:
let $\Delta_\M$ be the modular operator for $\M$
with respect to $\Omega$. For $a \in W_{\mathrm R}$, consider the map
\[
 \M \ni x \longmapsto \Delta_\M^\frac14 U(a)x\Omega.
\]
If this map from $\M$ to $\H$ is nuclear, we say that the Borchers triple $(\M, U, \Omega)$ satisfies
\textbf{the modular nuclearity} (for $a$), and then there is a type I factor $\R_a$ such that
$\Ad U(a)(\M) \subset \R_a \subset \M$ \cite[Proposition 2.4]{BDL90}, \cite[Proposition 5.2]{BDL07}.
If $\mathrm{Ad}U(a)(\M) \subset \M$ is split for a sufficiently large $a$, then $\Omega$ is cyclic for
$\A(D_{0,a})$ \cite[Section 2]{Lechner08} (this statement is only implicit there).

\subsection{Free product of von Neumann algebras}\label{freeproduct}

Here we recall the free product construction in von Neumann algebras.
Nowadays, free product von Neumann algebras are usually understood in quite an abstract fashion
based on free independence, but a more constructive way starting with the notion of free products
of Hilbert spaces is appropriate in the context of AQFT.
Hence we summarize an older approach to free product von Neumann algebras.

Let $K$ be a possibly infinite index set and $\{\mathcal{H}_\kappa\}_{\kappa \in K}$
a family of Hilbert spaces with distinguished unit vectors $\Omega_\kappa \in \mathcal{H}_\kappa$.
With a one-dimensional subspace $\CC\Omega$ and a unit vector $\Omega$,
the free product $(\mathcal{H},\Omega)$ of these $(\mathcal{H}_\kappa,\Omega_\kappa)$, $\kappa \in K$,
is defined by
\begin{equation}\label{Eq2.1}
\mathcal{H} = \mathbb{C}\Omega \oplus \bigoplus_{n\geq 1} \bigoplus_{\underset{1 \leq j \leq n-1}{\kappa_j \neq \kappa_{j+1}}} \mathcal{H}^\circ_{\kappa_1}\botimes\cdots\botimes\mathcal{H}^\circ_{\kappa_n} \quad \text{with} \quad \mathcal{H}^\circ_\kappa := \mathcal{H}_\kappa\ominus\mathbb{C}\Omega_\kappa.   
\end{equation}
Let us denote by $P_{\CC\Omega}$ and $P_{\H_\k^\circ}$
the orthogonal projections onto $\CC\Omega$ and $\H_\k^\circ$, respectively.
For any given family $\{T_\k\}_{\k\in K}$ of bounded operators on $\mathcal{H}_\kappa$ (linear or antilinear),
with norm $\|T_\k\|$ 
not greater than $1$,
such that $T_\kappa \H^\circ_\k \subset \H^\circ_\k$,
we use the symbol 
$$
\bigstar_{\kappa \in K} T_\kappa := T_\Omega \oplus \bigoplus_{n\geq1}\bigoplus_{\underset{1 \leq j \leq n-1}{\kappa_j \neq \kappa_{j+1}}} T_{\kappa_1}^\circ\otimes\cdots\otimes T_{\kappa_n}^\circ \in \B(\mathcal{H}) \quad \text{with} \quad 
T_\kappa^\circ := T_\kappa\!\upharpoonright_{\mathcal{H}_\kappa^\circ},
$$
where $T_\Omega = \1_\CC$  if $T_\k$'s are linear and $T_\Omega = J_\CC$ (the complex conjugation)
if $T_\k$'s are antilinear
(according to the direct sum decomposition \eqref{Eq2.1}) following \cite[\S1.8]{Voiculescu85}. When $T_\k \Omega_\k = \Omega_\k$ and $T_\k^* \Omega_\k = \Omega_\k$
(a stronger assumption than $T_\kappa \H^\circ_\k \subset \H^\circ_\k$), the resulting  
$\bigstar_{\kappa \in K} T_\kappa$ restores $T_\k$ as its restriction to $\mathcal{H}_\k = \mathbb{C}\Omega \oplus \mathcal{H}_k^\circ \subset \mathcal{H}$ and hence it is a common extension of the given $T_\k$. For each $ \kappa \in K$ we have two normal representations $\lambda_\kappa : \B(\mathcal{H}_\kappa) \curvearrowright \mathcal{H}$
and $\rho_\kappa : \B(\mathcal{H}_\kappa) \curvearrowright \mathcal{H}$ acting from the left and the right, respectively,
as in \cite[\S1.2]{Voiculescu85}, and they enjoy the following commutation relation \cite[\S1.3]{Voiculescu85}:
For any $\kappa, \kappa' \in K$ and $T \in \B(\mathcal{H}_\kappa)$ and $T' \in \B(\mathcal{H}_{\kappa'})$ we have 
\begin{align*}
[\lambda_\kappa(T), \rho_{\kappa'}(T')] = \delta_{\kappa,\kappa'} (P_{\mathbb{C}\Omega} + P_{\mathcal{H}_\kappa^\circ})\lambda_\kappa([T,T']) = \delta_{\kappa,\kappa'} \rho_\kappa([T,T'])(P_{\mathbb{C}\Omega} + P_{\mathcal{H}_\kappa^\circ}).
\end{align*}
Define the unitary involution $Z : \mathcal{H} \to \mathcal{H}$ by $Z\Omega = \Omega$ and 
$$
Z(\xi_1\otimes\cdots\xi_n) = \xi_n\otimes\cdots\otimes\xi_1 \quad \text{(in reverse order)}
$$ 
for any $\xi_1\otimes\cdots\otimes\xi_n \in \mathcal{H}^\circ_{\kappa_1}\botimes\cdots\botimes\mathcal{H}^\circ_{\kappa_n} \subset \mathcal{H}$. As remarked in \cite[\S1.7]{Voiculescu85}, 
\begin{equation}\label{Eq2.2}
Z\lambda_\kappa(T) = \rho_\kappa(T)Z
\end{equation} 
holds for every $T \in \B(\mathcal{H}_\kappa)$. We will keep this notation throughout this paper. Let $\mathcal{M}_\kappa \subseteq \B(\mathcal{H}_\kappa)$, $\kappa \in K$, be von Neumann algebras such that each $\Omega_\kappa$ is cyclic for $\mathcal{M}_\kappa$.
Then, it is known, see \cite[Proposition 1.9]{Voiculescu85} due to Voiculescu, that 
\begin{equation}\label{Eq2.3}
\Bigg(\bigvee_{\kappa\in K} \lambda_\kappa(\mathcal{M}_\kappa)\Bigg)' = \bigwedge_{\kappa \in K} \lambda_\kappa(\mathcal{M}_\kappa)'  = \bigvee_{\kappa \in K} \rho_\kappa(\mathcal{M}_\kappa') 
\end{equation}
on $\mathcal{H}$, where only the second equality is nontrivial. 
Here the symbols $\bigvee$ and $\bigwedge$ denote the operation of generation as von Neumann algebra and the intersection operation, respectively. Remark that $\Omega$ is cyclic for $\bigvee_{\kappa\in K} \lambda_\kappa(\mathcal{M}_\kappa)$ by construction. Moreover, by the above commutation relation, we see that $\Omega$ is separating for $\bigvee_{\kappa \in K} \lambda_\kappa(\mathcal{M}_\kappa)$,
too, when every $\Omega_\kappa$ is separating for $\mathcal{M}_\kappa$.
The von Neumann algebra $\mathcal{M} = \bigvee_{\kappa \in K} \lambda_\kappa(\mathcal{M}_\kappa)$ is a concrete and standard realization of
the so-called {\bf free product von Neumann algebra} of the $(\mathcal{M}_\kappa,\omega_\kappa)$, $\kappa \in K$,
where the vector state $\omega_\kappa$ constructed from $\Omega_\kappa$ for every $\kappa \in K$. The vector state $\omega$ constructed from $\Omega$ is called the {\bf free product state}.

Abstractly, the {\bf free product} $(\mathcal{M},\omega) = \bigstar_{\kappa \in K} (\mathcal{M}_\kappa,\omega_\kappa)$ 
can be formulated by the following four conditions: 
\begin{itemize}
\item There exist normal injective $*$-homomorphisms $\lambda_\kappa : \mathcal{M}_\kappa \to \mathcal{M}$, $\kappa \in K$, whose ranges generate $\mathcal{M}$ as von Neumann algebra. 
\item $\omega\circ\lambda_\kappa = \omega_\kappa$ for every $\kappa \in K$. 
\item The $\lambda_\kappa(\mathcal{M}_\kappa)$, $\kappa \in K$, are {\bf freely independent} in $(\mathcal{M},\omega)$, that is, 
\begin{align}\label{eq:freeindep}
\omega(\lambda_{\kappa_1}(x_1) \cdots \lambda_{\kappa_n}(x_n)) = 0
\end{align}
whenever $x_j \in \mathrm{Ker}(\omega_{\kappa_j})\cap\mathcal{M}_{\kappa_j}$ with $\kappa_j \neq \kappa_{j+1}$ ($1 \leq j \leq n-1$) and $n \geq 1$. 
\item The GNS representation of $\mathcal{M}$ associated with $\omega$ is faithful. 
\end{itemize}
It is fundamental that the modular automorphism $\sigma_\mathcal{M}^t$ associated with $\omega$ can be determined by
the modular condition (the KMS condition with inverse temperature $-1$) as 
\begin{equation}\label{Eq2.4} 
\sigma^t_\mathcal{M}\circ\lambda_\kappa = \lambda_\kappa\circ\sigma^t_{\mathcal{M}_\kappa} 
\end{equation}
for every $\kappa \in K$, where $\sigma^t_{\mathcal{M}_\kappa}$ is the modular automorphism of $\mathcal{M}_\kappa$ associated with $\omega_\kappa$,
see \cite[Lemma 1]{Barnett95}\cite[Theorem 1]{Dykema94} (n.b.\! the proof of the latter essentially uses Voiculescu's computation of the $S$-operator in \cite[Lemma 1.8]{Voiculescu85}). This formula for the modular automorphisms immediately implies
that the modular operator $\Delta_\mathcal{M}$ and the modular conjugation $J_\mathcal{M}$ with respect to $\Omega$ are computed as 
\begin{equation}\label{Eq2.5} 
\Delta_\mathcal{M}^{it} = \bigstar_{\kappa \in K} \Delta_{\mathcal{M}_\kappa}^{it}, \quad 
J_\mathcal{M} = Z (\bigstar_{\kappa \in K} J_{\mathcal{M}_\kappa}) = (\bigstar_{\kappa \in K} J_{\mathcal{M}_\kappa}) Z, 
\end{equation}
where $\Delta_{\mathcal{M}_\kappa}$ and $J_{\mathcal{M}_\kappa}$ denote the modular operator and the modular conjugation for $\mathcal{M}_\kappa$ with respect to $\Omega_\kappa$.

\section{Free products of Half-sided modular inclusions}\label{hsmi}
Let $\mathcal{N}_\kappa \subset \mathcal{M}_\kappa$, $\kappa \in K$, be a (possibly infinite) family of inclusions of nontrivial
von Neumann algebras on Hilbert spaces $\mathcal{H}_\kappa$. Assume that each inclusion admits a common cyclic and separating unit vector $\Omega_\kappa \in \mathcal{H}_\kappa$. We denote by $\omega_\kappa$ the vector state obtained from $\Omega_\kappa$. 

\medskip
Let $(\mathcal{H},\Omega)$ be the free product of the $(\mathcal{H}_\kappa,\Omega_\kappa)$, $\kappa \in K$.
Consider the free products $(\mathcal{M},\omega) = \bigstar_{\kappa \in K} (\mathcal{M}_\kappa,\omega_\kappa)$
and $(\mathcal{N},\omega) = \bigstar_{\kappa \in K} (\mathcal{N}_\kappa,\omega_\kappa)$.
As explained in Section \ref{freeproduct}, these pairs are explicitly constructed as
$\mathcal{M} = \bigvee_{\kappa \in K} \lambda_\kappa(\mathcal{M}_\kappa) \supseteq \mathcal{N} = \bigvee_{\kappa \in K} \lambda_\kappa(\mathcal{N}_\kappa)$
and $\omega = \<\Omega, \cdot\,\Omega\>$ on $\mathcal{H}$ and $\Omega$ is a common cyclic and separating unit vector
for $\mathcal{N} \subseteq \mathcal{M}$. 

\medskip
As in \eqref{Eq2.4}, the modular automorphisms $\sigma^t_\mathcal{M}$ and $\sigma^t_\mathcal{N}$ of $\mathcal{M}$ and $\mathcal{N}$, respectively,
associated with $\omega$ are written as 
$$
\sigma^t_\mathcal{M}\circ\lambda_\kappa = \lambda_\kappa\circ\sigma^t_{\mathcal{M}_\kappa},
\quad 
\sigma^t_\mathcal{N}\circ\lambda_\kappa = \lambda_\kappa\circ\sigma^t_{\mathcal{N}_\kappa},   
$$
where $\sigma^t_{\mathcal{M}_\kappa}$ and $\sigma^t_{\mathcal{N}_\kappa}$ denotes the modular automorphisms of $\mathcal{M}_\kappa$ and $\mathcal{N}_\kappa$, respectively, associated with $\omega_\kappa$.
This description of the modular automorphisms immediately gives the next Lemma. 

\begin{lemma}\label{L1} If $\sigma^t_{\mathcal{M}_\kappa}(\mathcal{N}_\kappa) \subset \mathcal{N}_\kappa$ for all $t \geq 0$
for all $\kappa \in K$, then $\sigma^t_\mathcal{M}(\mathcal{N}) \subset \mathcal{N}$ for all $t \geq 0$. Hence, when all the $(\mathcal{N}_\kappa \subset \mathcal{M}_\kappa,\Omega_\kappa)$ are half-sided modular inclusions, so is $(\mathcal{N} \subset \mathcal{M}, \Omega)$.
\end{lemma}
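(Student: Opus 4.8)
The plan is to exploit the explicit description of the modular flow in \eqref{Eq2.4} and then pass from the generators of $\mathcal{N}$ to all of $\mathcal{N}$ by normality. First I would fix $t \geq 0$ and $\kappa \in K$ and examine the action of $\sigma^t_\mathcal{M}$ on the generating subalgebra $\lambda_\kappa(\mathcal{N}_\kappa)$. For $x \in \mathcal{N}_\kappa$, formula \eqref{Eq2.4} gives $\sigma^t_\mathcal{M}(\lambda_\kappa(x)) = \lambda_\kappa(\sigma^t_{\mathcal{M}_\kappa}(x))$. By the standing hypothesis $\sigma^t_{\mathcal{M}_\kappa}(\mathcal{N}_\kappa) \subset \mathcal{N}_\kappa$ for $t \geq 0$, we have $\sigma^t_{\mathcal{M}_\kappa}(x) \in \mathcal{N}_\kappa$, whence $\sigma^t_\mathcal{M}(\lambda_\kappa(x)) \in \lambda_\kappa(\mathcal{N}_\kappa) \subset \mathcal{N}$. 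This shows that for every $t \geq 0$ the automorphism $\sigma^t_\mathcal{M}$ maps each of the generating algebras $\lambda_\kappa(\mathcal{N}_\kappa)$, $\kappa \in K$, into $\mathcal{N}$.

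The second step upgrades this from the generators to all of $\mathcal{N} = \bigvee_{\kappa \in K} \lambda_\kappa(\mathcal{N}_\kappa)$. Since $\sigma^t_\mathcal{M}$ is a normal $*$-automorphism, it maps the $*$-algebra generated by $\bigcup_{\kappa} \lambda_\kappa(\mathcal{N}_\kappa)$ into the $*$-algebra generated by $\bigcup_\kappa \sigma^t_\mathcal{M}(\lambda_\kappa(\mathcal{N}_\kappa)) \subseteq \mathcal{N}$, and, being $\sigma$-weakly continuous, it carries the $\sigma$-weak closure of the former into the $\sigma$-weak closure of the latter. As $\mathcal{N}$ is $\sigma$-weakly closed, this yields $\sigma^t_\mathcal{M}(\mathcal{N}) \subseteq \mathcal{N}$ for all $t \geq 0$, which is the first assertion.

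For the ``Hence'' clause I would simply verify the defining conditions of a half-sided modular inclusion for $(\mathcal{N} \subset \mathcal{M}, \Omega)$. That $\Omega$ is a common cyclic and separating vector for $\mathcal{N}$ and $\mathcal{M}$ was already recorded in the construction of the free product inclusion (see the discussion preceding the lemma), and $\sigma^t_\mathcal{M}$ is by definition the modular automorphism group of $\mathcal{M}$ with respect to $\Omega$; the inclusion $\sigma^t_\mathcal{M}(\mathcal{N}) \subset \mathcal{N}$ for $t \geq 0$, established above, is exactly the remaining requirement.

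I do not expect a genuine obstacle here: the substantive content is entirely carried by \eqref{Eq2.4}, whose proof (via Voiculescu's $S$-operator computation) is the nontrivial input cited earlier. The only point demanding a little care is the generation step, where one must invoke normality of $\sigma^t_\mathcal{M}$ rather than merely its $*$-homomorphism property, since $\mathcal{N}$ is the von Neumann algebra---not just the $*$-algebra---generated by the $\lambda_\kappa(\mathcal{N}_\kappa)$.
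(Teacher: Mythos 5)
Your argument is correct and is exactly the one the paper intends: the paper states that the formula \eqref{Eq2.4} for the modular automorphisms ``immediately gives'' the lemma, and your two steps (applying \eqref{Eq2.4} on the generators $\lambda_\kappa(\mathcal{N}_\kappa)$ and then passing to the generated von Neumann algebra by normality of $\sigma^t_\mathcal{M}$) spell out precisely that implication. The verification of the remaining half-sided-modular-inclusion conditions via the cyclicity and separation of $\Omega$ recorded before the lemma also matches the paper's setup.
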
 

We call the triple $(\mathcal{N} \subset \mathcal{M},\Omega)$ the 
{\bf free product of the (given) half-sided modular inclusions} $(\mathcal{N}_\kappa \subset \mathcal{M}_\kappa,\Omega_\kappa)$, $\kappa \in K$, and write $(\mathcal{N} \subset \mathcal{M},\Omega) = \bigstar_{\kappa \in K} (\mathcal{N}_\kappa \subset \mathcal{M}_\kappa,\Omega_\kappa)$. 

\medskip
The computation of the modular automorphism $\sigma^t_\mathcal{M}$ also enables us to prove the following: 

\begin{lemma}\label{L2} If $\mathcal{N} = \mathcal{M}$, then $\mathcal{N}_\kappa = \mathcal{M}_\kappa$ for all $\kappa \in K$. 
\end{lemma}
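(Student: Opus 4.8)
It suffices to fix $\kappa_0 \in K$ and show $\mathcal{N}_{\kappa_0} = \mathcal{M}_{\kappa_0}$ under the hypothesis $\mathcal{N} = \mathcal{M}$, and the plan is to read this off from the $\omega$-preserving conditional expectation onto the $\kappa_0$-th free factor. First I would produce this expectation via Takesaki's theorem: by \eqref{Eq2.4} one has $\sigma^t_{\mathcal{M}}\circ\lambda_{\kappa_0} = \lambda_{\kappa_0}\circ\sigma^t_{\mathcal{M}_{\kappa_0}}$, so the subalgebra $\lambda_{\kappa_0}(\mathcal{M}_{\kappa_0})\subseteq\mathcal{M}$ is globally invariant under the modular group $\sigma^t_{\mathcal{M}}$ of $\omega$. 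Hence there is a unique normal conditional expectation $E\colon\mathcal{M}\to\lambda_{\kappa_0}(\mathcal{M}_{\kappa_0})$ with $\omega\circ E=\omega$, and in particular $E(\mathcal{M})=\lambda_{\kappa_0}(\mathcal{M}_{\kappa_0})$.

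Next I would pin down the explicit action of $E$ on reduced words. Using $\omega\circ E=\omega$, faithfulness of $\omega_{\kappa_0}$ (as $\Omega_{\kappa_0}$ is separating for $\mathcal{M}_{\kappa_0}$), and the free independence relation \eqref{eq:freeindep}, a short moment computation shows that $E$ annihilates every centered reduced word $\lambda_{\kappa_1}(x_1)\cdots\lambda_{\kappa_n}(x_n)$ (with $x_j\in\mathrm{Ker}(\omega_{\kappa_j})$ and $\kappa_j\neq\kappa_{j+1}$) of length $n\ge 2$, as well as every single centered letter $\lambda_{\kappa_1}(x_1)$ with $\kappa_1\neq\kappa_0$, while it fixes $\lambda_{\kappa_0}(\mathcal{M}_{\kappa_0})$ pointwise. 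Concretely, for such a word $w$ and any $a\in\mathcal{M}_{\kappa_0}$ one checks $\omega(\lambda_{\kappa_0}(a)^*w)=0$ by splitting $a^*$ into its scalar and centered parts and re-reducing the resulting words, which forces $E(w)=0$ since $\omega$ is faithful on $\lambda_{\kappa_0}(\mathcal{M}_{\kappa_0})$.

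Now I would track $\mathcal{N}$ through $E$. Every element of the $*$-algebra generated by $\{\lambda_\kappa(\mathcal{N}_\kappa)\}_{\kappa\in K}$ is, by the standard recursive reduction (centering each letter with $\mathcal{N}_{\kappa_j}\ni\1$ and recombining adjacent equal indices), a finite linear combination of $\1$ and centered reduced words whose letters lie in $\mathrm{Ker}(\omega_{\kappa_j})\cap\mathcal{N}_{\kappa_j}$. Applying the description of $E$ above, all such words are killed except the centered singletons from the $\kappa_0$-factor, which lie in $\lambda_{\kappa_0}(\mathcal{N}_{\kappa_0})$; hence $E$ maps this generating $*$-algebra into $\lambda_{\kappa_0}(\mathcal{N}_{\kappa_0})$. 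Since $E$ is normal and $\lambda_{\kappa_0}(\mathcal{N}_{\kappa_0})$ is weakly closed, Kaplansky density gives $E(\mathcal{N})\subseteq\lambda_{\kappa_0}(\mathcal{N}_{\kappa_0})$. Finally, the hypothesis $\mathcal{N}=\mathcal{M}$ yields $\lambda_{\kappa_0}(\mathcal{M}_{\kappa_0})=E(\mathcal{M})=E(\mathcal{N})\subseteq\lambda_{\kappa_0}(\mathcal{N}_{\kappa_0})\subseteq\lambda_{\kappa_0}(\mathcal{M}_{\kappa_0})$, so all inclusions are equalities and, $\lambda_{\kappa_0}$ being injective, $\mathcal{N}_{\kappa_0}=\mathcal{M}_{\kappa_0}$.

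I expect the main obstacle to be the second step: establishing the precise form of $E$ on reduced words together with the normality passage from the generating $*$-algebra to the weak closure $\mathcal{N}$. Everything else is structural, and the modular formula \eqref{Eq2.4} does the real work by guaranteeing that the factor $\lambda_{\kappa_0}(\mathcal{M}_{\kappa_0})$ is $\sigma^t_{\mathcal{M}}$-invariant, which is exactly what the existence of the state-preserving $E$ requires.
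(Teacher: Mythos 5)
Your proof is correct, but it runs along a genuinely different track from the paper's. You work with the $\omega$-preserving conditional expectation $E\colon\mathcal{M}\to\lambda_{\kappa_0}(\mathcal{M}_{\kappa_0})$ and then determine $E$ explicitly on centered reduced words (killing all words of length $\ge 2$ and all centered singletons from other factors), so that a Kaplansky-density/normality passage gives $E(\mathcal{N})\subseteq\lambda_{\kappa_0}(\mathcal{N}_{\kappa_0})$ and the hypothesis $\mathcal{N}=\mathcal{M}$ forces equality. The paper instead takes the $\omega$-preserving expectation $E\colon\mathcal{N}\to\lambda_{\kappa}(\mathcal{N}_{\kappa})$ (which exists by the same modular-invariance argument via \eqref{Eq2.4} and Takesaki), uses $\mathcal{M}=\mathcal{N}$ to \emph{restrict} it to $\lambda_{\kappa}(\mathcal{M}_{\kappa})$, and thereby obtains an $\omega_{\kappa}$-preserving conditional expectation $E_{\kappa}\colon\mathcal{M}_{\kappa}\to\mathcal{N}_{\kappa}$ on the single factor; it then concludes $E_{\kappa}=\mathrm{id}$ by the elementary Hilbert-space observation that $x\Omega_{\kappa}\mapsto E_{\kappa}(x)\Omega_{\kappa}$ implements the orthogonal projection of $[\mathcal{M}_{\kappa}\Omega_{\kappa}]$ onto $[\mathcal{N}_{\kappa}\Omega_{\kappa}]$, which is the identity because $\Omega_{\kappa}$ is cyclic for $\mathcal{N}_{\kappa}$, and then $E_\kappa = \mathrm{id}$ since $\Omega_\kappa$ is separating for $\mathcal{M}_\kappa$. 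The paper's route is softer and shorter: it never needs the word-by-word description of the expectation nor the density argument, only its existence. Your route is more computational but buys a little extra: it does not use cyclicity of $\Omega_{\kappa_0}$ for the subalgebra $\mathcal{N}_{\kappa_0}$ (only separation for $\mathcal{M}_{\kappa_0}$), whereas the paper's argument leans on the standing assumption that $\Omega_{\kappa}$ is cyclic for $\mathcal{N}_{\kappa}$ as well. Both are valid given the hypotheses of Section \ref{hsmi}.
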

\begin{proof}
Let us fix $\k \in K$.
By the formula of the modular automorphism $\sigma^t_\mathcal{N}$ explained above,
there exists a $\omega$-preserving conditional expectation $E$ from $\mathcal{N}$ onto $\lambda_\kappa(\mathcal{N}_\kappa)$.
By assumption, we have the restriction of $E$ to $\lambda_\kappa(\mathcal{M}_\kappa) \subset \mathcal{M} = \mathcal{N}$, which gives a faithful normal conditional expectation from $\lambda_\kappa(\mathcal{M}_\kappa)$ onto $\lambda_\kappa(\mathcal{N}_\kappa)$.
Hence $E_\k := \lambda_\kappa^{-1}\circ E \circ \lambda_\kappa$ gives a faithful normal conditional expectation from
$\mathcal{M}_\kappa$ onto $\mathcal{N}_\kappa$.
Observe that $\omega_\kappa(E_\k(x)) = \omega_\kappa(\lambda_\kappa^{-1}(E(\lambda_\kappa(x)))) = \omega(E(\lambda_\kappa(x))) = \omega(\lambda_\kappa(x)) = \omega_\kappa(x)$
for all $x \in \mathcal{M}_\kappa$. Hence $x\Omega_\kappa \in \mathcal{M}_\kappa \Omega_\kappa \mapsto E_\k(x)\Omega \in \mathcal{N}_\kappa\Omega_\kappa$ extends a unique orthogonal projection from $[\mathcal{M}_\kappa\Omega_\kappa]$ onto $[\mathcal{N}_\kappa\Omega_\kappa]$.  Since $\Omega_\kappa$ is cyclic for $\mathcal{N}_\kappa$, we conclude that $x\Omega_\kappa = E_\k(x)\Omega_\kappa$ holds for every $x \in \mathcal{M}_\kappa$.
Since $\Omega_\kappa$ is separating for $\mathcal{M}_\kappa$, we get $E_\k = \mathrm{id}$, that is, $\mathcal{N}_\kappa = \mathcal{M}_\kappa$. 
\end{proof} 

The above lemma actually says that if $\mathcal{N}_\kappa \subsetneqq \mathcal{M}_\kappa$ for some $\kappa \in K$, 
then $\mathcal{N} \subsetneqq \mathcal{M}$. We also remark that \emph{if the centralizers $(\mathcal{M}_\kappa)_{\omega_\kappa} = \mathbb{C}\1$
for all $\kappa \in K$, then $\mathcal{M}_\omega = \mathbb{C}\1$} thanks to \cite[Lemma 7]{Barnett95} (also see the proof of \cite[Lemma 2.1]{Ueda11-2}).

\medskip
In what follows, we denote by $J_{\mathcal{N}_\kappa}$ and $J_{\mathcal{M}_\kappa}$ the modular conjugation operators for $\mathcal{N}_\kappa$ and $\mathcal{M}_\kappa$, respectively,
with respect to $\Omega_\kappa$. We also denote by $J_\mathcal{N}$ and $J_\mathcal{M}$ the modular conjugation operators for $\mathcal{N}$ and $\mathcal{M}$, respectively, constructed by $\Omega$. Set $\Gamma_\kappa := J_{\mathcal{N}_\kappa} J_{\mathcal{M}_\kappa}$ for every $\kappa \in K$ as well as $\Gamma := J_\mathcal{N} J_\mathcal{M}$.
Furthermore, it holds that $\s_{\M_\k}^{4\pi n}(\N_\k) = \Ad \Gamma_\k^n(\N_\k)$
and $\s_{\M}^{4\pi n}(\N) = \Ad \Gamma^n(\N)$, see \cite[Corollary 4]{Wiesbrock93-1}

\begin{proposition}\label{pr:ergodicity} If $\Gamma_\kappa^n \to P_{\mathbb{C}\Omega_\kappa}$ weakly as $n\to\infty$ for all $\kappa \in K$, then $\Gamma^n \to P_{\mathbb{C}\Omega}$ weakly as $n\to\infty$.
In particular, if $\s_{\M_\k}^t$ is ergodic on on $\N_\k$ for every $\kappa \in K$,
then $\s_{\M}^t$ is ergodic on $\N$,
or equivalently, the canonical endomorphism $\gamma := \mathrm{Ad}\Gamma$
is ergodic,
i.e., $\bigcap_{n\geq1}\gamma^n(\mathcal{M}) = \mathbb{C}1$.
\end{proposition}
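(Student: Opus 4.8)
The plan is to reduce the whole statement to the single operator identity $\Gamma=\bigstar_{\kappa\in K}\Gamma_\kappa$ and then run a componentwise estimate on the free product Hilbert space \eqref{Eq2.1}. First I would compute $\Gamma$ from the modular data \eqref{Eq2.5}: writing $J_\N=(\bigstar_\kappa J_{\N_\kappa})Z$ and $J_\M=Z(\bigstar_\kappa J_{\M_\kappa})$ and using $Z^2=\1$,
\[
\Gamma=J_\N J_\M=(\bigstar_\kappa J_{\N_\kappa})\,Z\,Z\,(\bigstar_\kappa J_{\M_\kappa})=(\bigstar_\kappa J_{\N_\kappa})(\bigstar_\kappa J_{\M_\kappa})=\bigstar_\kappa(J_{\N_\kappa}J_{\M_\kappa})=\bigstar_\kappa\Gamma_\kappa .
\]
Here the fourth equality is the multiplicativity of $\bigstar$: on the sector $\H^\circ_{\kappa_1}\botimes\cdots\botimes\H^\circ_{\kappa_m}$ both factors act as tensor products of the respective restrictions to the $\H^\circ_{\kappa_j}$, and on $\CC\Omega$ the two antilinear $J_\CC$'s compose to $\1_\CC$; this uses that each $J_{\M_\kappa}$ preserves $\H^\circ_\kappa$. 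Since $\Gamma_\kappa\Omega_\kappa=\Omega_\kappa$, hence $\Gamma_\kappa\H^\circ_\kappa\subseteq\H^\circ_\kappa$, the same multiplicativity gives $\Gamma^n=\bigstar_\kappa\Gamma_\kappa^n$ for every $n$.

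Next, the weak convergence. Writing $\Gamma_\kappa^\circ:=\Gamma_\kappa\!\upharpoonright_{\H^\circ_\kappa}$, the hypothesis $\Gamma_\kappa^n\to P_{\CC\Omega_\kappa}$ weakly is equivalent, since $\Gamma_\kappa^n$ preserves the decomposition $\CC\Omega_\kappa\oplus\H^\circ_\kappa$, to $(\Gamma_\kappa^\circ)^n\to0$ weakly on $\H^\circ_\kappa$. On $\CC\Omega$ one has $\Gamma^n\Omega=\Omega=P_{\CC\Omega}\Omega$, while on a sector with $m\geq1$ the operator $\Gamma^n$ acts as $(\Gamma_{\kappa_1}^\circ)^n\otimes\cdots\otimes(\Gamma_{\kappa_m}^\circ)^n$ and $P_{\CC\Omega}$ acts as $0$. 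Testing against simple tensors $\xi=\xi_1\otimes\cdots\otimes\xi_m$, $\eta=\eta_1\otimes\cdots\otimes\eta_m$ of that sector factorizes the matrix element,
\[
\langle\eta,\Gamma^n\xi\rangle=\prod_{j=1}^{m}\langle\eta_j,(\Gamma_{\kappa_j}^\circ)^n\xi_j\rangle\xrightarrow[n\to\infty]{}0,
\]
each factor tending to $0$, and matrix elements between different sectors vanish identically because $\Gamma^n$ preserves the sectors. As $\Gamma$ is the product of the two antiunitaries $J_\N,J_\M$ it is unitary, so $\|\Gamma^n\|=1$ uniformly; since $\Omega$ and the simple tensors span a dense subspace, a standard $3\varepsilon$-argument upgrades the convergence on this dense set to $\Gamma^n\to P_{\CC\Omega}$ weakly on all of $\H$. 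This proves the first assertion.

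For the ``in particular'' part I would pass between ergodicity and this weak convergence through the Borchers-triple picture. Each ergodic $(\N_\kappa\subset\M_\kappa,\Omega_\kappa)$ yields a one-dimensional Borchers triple in which $\Gamma_\kappa=T_\kappa(2)$ and the dilations $\Delta_{\M_\kappa}^{it}$ scale the positive generator $P_\kappa$ of $T_\kappa$; scaling invariance of the spectrum forbids eigenvalues of $P_\kappa$ in $(0,\infty)$ and forces the spectral measure on $(\CC\Omega_\kappa)^\perp$ to be of Lebesgue class, so by Riemann--Lebesgue $\Gamma_\kappa^n=e^{2\ima nP_\kappa}\to P_{\CC\Omega_\kappa}$ weakly. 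Thus the hypothesis of the first part holds and $\Gamma^n\to P_{\CC\Omega}$ weakly. Conversely, by von Neumann's mean ergodic theorem the Cesàro means of $\Gamma^n$ converge to the projection onto $\ker(\Gamma-\1)$, while by the first part they converge weakly to $P_{\CC\Omega}$; hence $\ker(\Gamma-\1)=\CC\Omega$. As every $T$-invariant vector is $\Gamma=T(2)$-invariant, $\Omega$ is the unique (up to scalar) invariant vector of the free product triple, and by the correspondence recalled in Section \ref{chiral} this is exactly the ergodicity $\bigcap_{t\ge0}\sigma^t_\M(\N)=\CC\1$, equivalently $\bigcap_{n\geq1}\gamma^n(\M)=\CC\1$.

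The main obstacle I expect is not the algebra but the two convergence passages: justifying $\Gamma^n=\bigstar_\kappa\Gamma_\kappa^n$ together with the dense-set/uniform-bound argument on the infinite free product, and---more delicately---the bridge ``ergodicity $\Leftrightarrow$ weak convergence of $\Gamma^n$'', which genuinely uses the dilation covariance of each HSMI for one direction and the mean ergodic theorem for the other. Without the dilation structure a unique invariant vector alone would not force the weak convergence (a positive eigenvalue of $P$ at an integer multiple of $\pi$ would survive in $\Gamma^n=e^{2\ima nP}$), so it is important that this case is excluded precisely by the translation-dilation symmetry of a HSMI.
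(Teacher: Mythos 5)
Your proof of the main assertion is essentially the paper's: both reduce everything to the identity $\Gamma=\bigstar_{\kappa\in K}\Gamma_\kappa$ obtained from \eqref{Eq2.5} by cancelling the two $Z$'s, observe that $\Gamma^n$ acts on each sector $\H^\circ_{\kappa_1}\botimes\cdots\botimes\H^\circ_{\kappa_m}$ as $(\Gamma^\circ_{\kappa_1})^n\otimes\cdots\otimes(\Gamma^\circ_{\kappa_m})^n$, and conclude by testing against simple tensors together with the uniform bound $\Vert\Gamma^n\Vert=1$. The only divergence is in the ``in particular'' part, which the paper dispatches in one sentence by citing \cite[Proposition 2.1, Corollary 2.2]{Longo84} and \cite[Remark 5.2]{Longo87} for the equivalence between ergodicity of the canonical endomorphism and the weak convergence of the powers of $\Gamma$ to the vacuum projection (applied once per $\kappa$ to supply the hypothesis of the first assertion, and once to the free product to extract the conclusion). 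You instead make both bridges explicit: the forward one via dilation covariance, the Lebesgue spectral type of the translation generator on $(\CC\Omega_\kappa)^\perp$, and Riemann--Lebesgue; the backward one via the mean ergodic theorem and the implication ``uniqueness of the $T$-invariant vector $\Rightarrow$ ergodicity'' recalled in Section \ref{chiral}. This is correct and more self-contained in the forward direction, though the implication you invoke for the backward direction is itself the content of the same results of Longo, so the two arguments ultimately rest on the same input; your version also tacitly uses that Wiesbrock's theorem applies to the free product inclusion to produce $T$ with $\Gamma=T(2)$, which is legitimate since Lemma \ref{L1} has already shown that $(\N\subset\M,\Omega)$ is a half-sided modular inclusion.
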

\begin{proof} By construction we observe that $J_{\mathcal{N}_\kappa}\Omega_\kappa = J_{\mathcal{M}_\kappa}\Omega_\kappa = \Omega_\kappa$, and thus $\Gamma_\kappa\Omega_\kappa = \Omega_\kappa$. Write $\Gamma_\kappa^\circ := \Gamma_\kappa\!\upharpoonright_{\mathcal{H}_\kappa^\circ}$ (which is well-defined) and observe by assumption that $(\Gamma_\kappa^\circ)^n \to 0$ weakly as $n\to\infty$. By the fact \eqref{Eq2.5}, we obtain that 
$$
\Gamma = (\bigstar_{\kappa \in K} J_{\mathcal{M}_\kappa})Z\,Z(\bigstar_{\kappa \in K} J_{\mathcal{N}_\kappa})
= \bigstar_{\kappa \in K} \Gamma_\kappa = \1 \oplus \bigoplus_{n\geq 1} \bigoplus_{\underset{1 \leq j \leq n-1}{\kappa_j \neq \kappa_{j+1}}} \Gamma_{\kappa_1}^\circ \otimes \cdots \otimes \Gamma_{\kappa_n}^\circ.  
$$
Since all the $\Gamma_\kappa^\circ$ are unitary operators, it is not hard to see that each $(\Gamma_{\kappa_1}^\circ \otimes \cdots \otimes \Gamma_{\kappa_n}^\circ)^n = (\Gamma_{\kappa_1}^\circ)^n \otimes \cdots \otimes (\Gamma_{\kappa_n}^\circ)^n \to 0$ weakly as $n\to\infty$, and thus $\Gamma^n \to P_{\mathbb{C}\Omega}$ weakly as $n\to\infty$. 

The latter assertion follows from the former thanks to \cite[Proposition 2.1, Corollary 2.2]{Longo84} and \cite[Remark 5.2]{Longo87}
and the fact that $\s_{\M}^{4\pi n}(\N) = \Ad \Gamma^n(\N)$ and $\Ad \Gamma(\M) \subset \N$.
\end{proof}
Let us remark that the fact that $\M$ is a factor follows from
\cite[Theorem 3.4]{Ueda11-1}, without the assumption of ergodicity.

One can also take the free product of the corresponding one-dimensional Borchers triples.
There one obtains an explicit construction of $T$ and the factoriality and the ergodicity follow
from the uniqueness of $\Omega$. As we will present the free product of M\"obius covariant nets in Section \ref{moebius}
and the construction will be similar, we omit it here.

The next proposition is a general assertion on free products of infinitely many copies of a fixed pair $(\mathcal{M}_0,\omega_0)$ of
von Neumann algebra and faithful normal state, and a nontrivial von Neumann subalgebra $\mathcal{N}_0 \subset \mathcal{M}_0$.
We emphasize that we do not assume that $\mathcal{N}_0$ is the range of a faithful normal conditional expectation from $\mathcal{M}_0$
(indeed, when $\M_0$ and $\N_0$ come from a local M\"obius covariant net, such a conditional expectation does not exist,
see Appendix \ref{noexpectation}).
One of the keys in the proof below is the explicit use of the conditional expectation $E_{K_1}$ for a subset $K_1$ of $K$
and also it is an important point in the proof below that the analytic approximation $b$ of $a$ need not stay in $\mathcal{N}$ in comparison to \cite[Theorem 5.2]{DM16}. 
The proof below is motivated by \cite[Theorem 3.3]{DDM14}. 
\begin{proposition}\label{P5}
Let $\mathcal{M}_0$ be a von Neumann algebra equipped with a faithful normal state $\omega_0$ and
$\mathcal{N}_0 \subset \mathcal{M}_0$ a nontrivial
von Neumann subalgebra.
Assume that all $(\mathcal{N}_\kappa \subset \mathcal{M}_\kappa, \omega_\kappa)$, $\kappa \in K$, are copies of
$(\mathcal{N}_0 \subset \mathcal{M}_0, \omega_0)$,
and moreover that the index set $K$ is infinite.
We set $\mathcal{N} := \bigvee_{\kappa \in K} \lambda_\kappa(\mathcal{N}_0) \subset \mathcal{M}$, where
$(\mathcal{M},\omega) = \bigstar_{\kappa \in K} (\mathcal{M}_\kappa,\omega_\kappa)$.
Then $\mathcal{N}' \cap \mathcal{M} = \mathbb{C}\1$, that is, the inclusion $\mathcal{N} \subset \mathcal{M}$ is irreducible. 
\end{proposition}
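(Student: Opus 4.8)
The plan is to show that every $a \in \mathcal{N}'\cap\mathcal{M}$ is scalar; replacing $a$ by $a-\omega(a)\1$ it suffices to treat $a$ with $\omega(a)=0$ and $\|a\|\le 1$, and to prove $a=0$. Throughout I write $\|x\|_2 := \omega(x^*x)^{1/2}$, so that $\|x\|_2=\|x\Omega\|$ and the conclusion $a=0$ will follow from $\|a\|_2=0$ since $\Omega$ is separating. For each finite $K_1\subset K$ I set $\mathcal{M}_{K_1} := \bigvee_{\kappa\in K_1}\lambda_\kappa(\mathcal{M}_0)$. By the modular formula \eqref{Eq2.4} this subalgebra is globally invariant under $\sigma^t_\mathcal{M}$, so Takesaki's theorem furnishes a unique $\omega$-preserving normal conditional expectation $E_{K_1}\colon \mathcal{M}\to\mathcal{M}_{K_1}$, which on $\mathcal{H}$ is the orthogonal projection $P_{K_1}$ onto $\overline{\mathcal{M}_{K_1}\Omega}$. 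As $K_1\nearrow K$ one has $P_{K_1}\to\1$ strongly, hence $\varepsilon_{K_1}:=\|(a-E_{K_1}(a))\Omega\|\to0$ and $\|E_{K_1}(a)\|_2\to\|a\|_2$.

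The core is a freeness identity. I fix once and for all a nonzero $y\in\mathcal{N}_0$ with $\omega_0(y)=0$ and $\|y\|\le1$ (such $y$ exists because $\mathcal{N}_0\neq\CC\1$, and $\|y\|_2>0$ since $\Omega_0$ is separating). Writing $c:=E_{K_1}(a)\in\mathcal{M}_{K_1}$ (centered, as $E_{K_1}$ preserves $\omega$) and choosing any $\kappa_0\in K\setminus K_1$, the element $\eta:=\lambda_{\kappa_0}(y)$ is centered and free from $\mathcal{M}_{K_1}$. Expanding $\|[c,\eta]\|_2^2$ into the four words $\eta^*c^*c\eta,\ c^*\eta^*\eta c,\ \eta^*c^*\eta c,\ c^*\eta^*c\eta$ and evaluating them with the free-independence relation \eqref{eq:freeindep} (after peeling off $\omega(c^*c)\1$ and $\omega(\eta^*\eta)\1$, every remaining contribution is an alternating centered word and so has vanishing $\omega$), I expect to obtain the exact identity $\|[c,\eta]\|_2^2=2\,\|y\|_2^2\,\|c\|_2^2$.

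The complementary half is an upper bound for the same quantity coming from $a\in\mathcal{N}'$. Since $\eta\in\mathcal{N}$ I have $[a,\eta]=0$, whence $[c,\eta]=-[\,a-c,\eta\,]$ and $\|[c,\eta]\|_2\le\|(a-c)\eta\|_2+\|\eta(a-c)\|_2$. The second summand is harmless, $\|\eta(a-c)\|_2\le\|\eta\|\,\varepsilon_{K_1}\le\varepsilon_{K_1}$. The first summand is the only real difficulty: to control $\|(a-c)\eta\Omega\|$ I must move $\eta$ to the commutant, which requires an analytic representative, yet $\mathcal{N}_0$ need not be $\sigma^{\omega_0}_{\mathcal{M}_0}$-invariant and may contain no nonzero analytic element. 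The resolution, and the point flagged in the discussion, is that I may approximate $y$ by $\sigma^{\omega_0}_{\mathcal{M}_0}$-analytic elements $y_\lambda\in\mathcal{M}_0$ (Gaussian smoothing) with $\rho_\lambda:=\|(y-y_\lambda)\Omega_0\|\to0$, \emph{allowing these to leave $\mathcal{N}_0$}. Setting $\eta_\lambda:=\lambda_{\kappa_0}(y_\lambda)$, analyticity together with \eqref{Eq2.4} gives $\eta_\lambda\Omega=h'_\lambda\Omega$ with $h'_\lambda:=J_\mathcal{M}\,\sigma^{i/2}_\mathcal{M}(\eta_\lambda)^*\,J_\mathcal{M}\in\mathcal{M}'$ and $\|h'_\lambda\|=\|\sigma^{i/2}_{\mathcal{M}_0}(y_\lambda)\|=:M_\lambda<\infty$, so that $\|(a-c)\eta_\lambda\Omega\|=\|h'_\lambda(a-c)\Omega\|\le M_\lambda\,\varepsilon_{K_1}$, while the correction obeys $\|(a-c)(\eta-\eta_\lambda)\Omega\|\le 2\|a\|\,\rho_\lambda\le 2\rho_\lambda$. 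Hence $\|(a-c)\eta\|_2\le M_\lambda\varepsilon_{K_1}+2\rho_\lambda$.

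Combining the two halves yields, for every finite $K_1$ and every $\lambda$, the inequality $\sqrt{2}\,\|y\|_2\,\|E_{K_1}(a)\|_2\le(M_\lambda+1)\,\varepsilon_{K_1}+2\rho_\lambda$. I then let $K_1\nearrow K$ with $\lambda$ fixed, using $\varepsilon_{K_1}\to0$ and $\|E_{K_1}(a)\|_2\to\|a\|_2$, to get $\sqrt{2}\,\|y\|_2\,\|a\|_2\le 2\rho_\lambda$, and finally $\lambda\to\infty$ forces $\|a\|_2=0$, i.e.\ $a=0$. The main obstacle is exactly the first commutator summand: transferring $\eta$ to $\mathcal{M}'$ needs an analytic representative that is unavailable inside $\mathcal{N}_0$, and the whole argument hinges on permitting the analytic approximation to leave the subalgebra and absorbing the resulting error through the independent estimate governed by $\rho_\lambda$.
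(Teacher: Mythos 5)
Your proof is correct and follows essentially the same route as the paper's: the same $\omega$-preserving conditional expectations $E_{K_1}$ onto finitely indexed subalgebras with $\kappa_0\in K\setminus K_1$, the same free-independence computation underlying the commutator estimate, and the same key device of letting the $\sigma_{\mathcal{M}_0}$-analytic approximant of the centered element of $\mathcal{N}_0$ leave the subalgebra, with the error absorbed via $\Vert (y-y_\lambda)\Omega_0\Vert$. The only cosmetic difference is that you evaluate $\Vert[E_{K_1}(a),\eta]\Omega\Vert^2$ exactly as $2\Vert y\Omega_0\Vert^2\,\Vert E_{K_1}(a)\Omega\Vert^2$, whereas the paper uses the orthogonality of the two halves of the commutator and keeps only the one summand it needs.
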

\begin{proof}
For any $x \in \mathcal{N}'\cap\mathcal{M}$ the new element $x - \omega(x)\1$
still stays in $\mathcal{N}'\cap \mathcal{M}$. Hence it suffices to prove that any $x \in \mathcal{N}'\cap\mathcal{M}$ with $\omega(x) = 0$ must be zero. 

For any subset $K_1 \subset K$, we define $\mathcal{M}_{K_1}$ is the von Neumann subalgebra generated by
$\{\lambda_\kappa(\mathcal{M}_0)\}_{\k \in K_1}$. By the definition of reduced free products, it is easy to see that $\mathcal{M}_{K_1}$ and $\mathcal{M}_{K\setminus K_1}$ are freely independent in $(\mathcal{M},\omega)$ (see \cite[Propositions 2.5.5(ii) and 2.5.7]{VDN92}).  Clearly, $\sigma^t_\mathcal{M}(\mathcal{M}_{K_1}) = \mathcal{M}_{K_1}$ for all $t \in \mathbb{R}$, and hence Takesaki's criterion guarantees that there exists a unique $\omega$-preserving conditional expectation $E_{K_1} : \mathcal{M} \to \mathcal{M}_{K_1}$. 

\medskip
Let $K_1 \subset K$ be an arbitrary {\it finite} subset. Then we can find ${\kappa_1'} \in K\setminus K_1$ since $K$ is an infinite set. Since $\mathcal{N}_0$ is nontrivial, one can also find a non-zero $a \in \mathcal{N}_0$ in such a way that 
$\omega_0(a) = 0$.

We claim the following:
for $x$ and $a$ as above, it holds that
\begin{align}\label{Claim6}
\lambda_{{\kappa_1'}}(a)E_{K_1}(x)\Omega \; \perp \; E_{K_1}(x)\lambda_{{\kappa_1'}}(a)\Omega.
\end{align}
Indeed, as remarked above, $\mathcal{M}_{K_1}$ and $\mathcal{M}_{K\setminus K_1}$ are freely independent in $(\mathcal{M},\omega)$
(see \eqref{eq:freeindep}). Remark that $\lambda_{{\kappa_1'}}(a) \in \mathcal{M}_{K\setminus K_1}$ and $\omega(\lambda_{{\kappa_1'}}(a)) = \omega_0(a) = 0$.
Observe that $\omega(E_{K_1}(x)) = \omega(x) = 0$. Hence we have 
\begin{align*}
\<\lambda_{{\kappa_1'}}(a)E_{K_1}(x)\Omega,\; E_{K_1}(x)\lambda_{{\kappa_1'}}(a)\Omega\>  
&= 
\<\Omega,\; E_{K_1}(x^*)\lambda_{{\kappa_1'}}(a^*) E_{K_1}(x)\lambda_{{\kappa_1'}}(a)\Omega\> \\
&= 
\omega(E_{K_1}(x^*)\lambda_{{\kappa_1'}}(a^*) E_{K_1}(x)\lambda_{{\kappa_1'}}(a)) = 0
\end{align*}
by free independence. 

Let $b \in \mathcal{M}_0$ be an arbitrary, $\sigma_{\mathcal{M}_0}$-analytic element.
Then, as $x \in \N'\cap \M$, we have 
\begin{align*}
0 
&= 
\lambda_{{\kappa_1'}}(a)x - x\lambda_{{\kappa_1'}}(a) \\
&= 
\lambda_{{\kappa_1'}}(a)(x-E_{K_1}(x)) + \lambda_{{\kappa_1'}}(a)E_{K_1}(x) - (x - E_{K_1}(x))\lambda_{{\kappa_1'}}(a) - E_{K_1}(x)\lambda_{{\kappa_1'}}(a) \\
&=
\lambda_{{\kappa_1'}}(a)(x-E_{K_1}(x)) + \lambda_{{\kappa_1'}}(a)E_{K_1}(x)  \\ 
&\qquad\qquad\qquad - (x - E_{K_1}(x))\lambda_{{\kappa_1'}}(a-b) - (x - E_{K_1}(x))\lambda_{{\kappa_1'}}(b) -  E_{K_1}(x)\lambda_{{\kappa_1'}}(a)
\end{align*} 
and hence 
\begin{align*}
&E_{K_1}(x)\lambda_{{\kappa_1'}}(a)\Omega - \lambda_{{\kappa_1'}}(a)E_{K_1}(x)\Omega \\
&\qquad= 
\lambda_{{\kappa_1'}}(a)(x-E_{K_1}(x))\Omega - (x - E_{K_1}(x))\lambda_{{\kappa_1'}}(a-b)\Omega - (x - E_{K_1}(x))\lambda_{{\kappa_1'}}(b)\Omega. 
\end{align*}
By \eqref{Claim6}, the vectors on the left-hand side are orthogonal. Thus we obtain that  
$$
\Vert \lambda_{{\kappa_1'}}(a)E_{K_1}(x)\Omega\Vert 
\leq 
\Vert a \Vert\,\Vert (x-E_{K_1}(x))\Omega\Vert + 2\Vert x\Vert\,\Vert \lambda_{{\kappa_1'}}(a-b)\Omega\Vert +  \Vert(x - E_{K_1}(x))\lambda_{{\kappa_1'}}(b)\Omega\Vert. 
$$ 
Since $b$ is $\sigma_{\mathcal{M}_0}$-analytic, $\lambda_{{\kappa_1'}}(b)$ is also $\sigma_\mathcal{M}$-analytic and  
$$
(x - E_{K_1}(x))\lambda_{{\kappa_1'}}(b)\Omega = J_\mathcal{M}\lambda_{{\kappa_1'}}(\sigma^{\mathrm{i}/2}_{\mathcal{M}_0}(b)^*)J_\mathcal{M}(x-E_{K_1}(x))\Omega. 
$$ 
Therefore, we get 
\begin{align*}
&\Vert \lambda_{{\kappa_1'}}(a)E_{K_1}(x)\Omega\Vert \\
&\qquad\leq 
\Vert a \Vert\,\Vert (x-E_{K_1}(x))\Omega\Vert + 2\Vert x\Vert\,\Vert \lambda_{{\kappa_1'}}(a-b)\Omega\Vert 
+  \Vert \sigma^{\mathrm{i}/2}_{\mathcal{M}_0}(b)\Vert\,\Vert(x - E_{K_1}(x))\Omega\Vert.
\end{align*}
Observe that 
$$
\Vert \lambda_{{\kappa_1'}}(a-b)\Omega\Vert^2 
= 
\omega(\lambda_{{\kappa_1'}}((a-b)^* (a-b))) = \omega_0((a-b)^* (a-b)) = \Vert (a-b)\Omega_0\Vert^2
$$
and that 
\begin{align*} 
\Vert \lambda_{{\kappa_1'}}(a)E_{K_1}(x)\Omega\Vert^2 
&= 
\omega(E_{K_1}(x)^* \lambda_{{\kappa_1'}}(a^* a) E_{K_1}(x)) \\
&= 
\omega(\lambda_{{\kappa_1'}}(a^* a)) \cdot\omega(E_{K_1}(x)^* E_{K_1}(x)) \qquad \text{(by free independence)} \\
&=
\omega_0(a^* a) \cdot\omega(E_{K_1}(x)^* E_{K_1}(x)) \\
&= 
\Vert a\Omega_0\Vert^2\,\Vert E_{K_1}(x)\Omega\Vert^2. 
\end{align*} 
Consequently, we have obtained that 
$$
\Vert a\Omega_0\Vert\,\Vert E_{K_1}(x)\Omega\Vert 
\leq 
\Vert a \Vert\,\Vert (x-E_{K_1}(x))\Omega\Vert + 2\Vert x\Vert\,\Vert (a-b)\Omega_0\Vert + \Vert \sigma^{\mathrm{i}/2}_{\mathcal{M}_0}(b)\Vert\,\Vert(x - E_{K_1}(x))\Omega\Vert.
$$
Taking the limit as $K_1 \nearrow K$ we get 
\begin{equation}\label{Eq3.1}
\Vert a\Omega_0\Vert\,\Vert x\Omega\Vert 
\leq 
2\Vert x\Vert\,\Vert (a-b)\Omega_0\Vert, 
\end{equation}
because the Jones projection $e_{K_1}$ associated with $E_{K_1}$ converges to the identity operator thanks to $\mathcal{M}_{K_1} \nearrow \mathcal{M}$ as $K_1 \nearrow K$. One can easily see that the linear manifold of all $b\Omega_0$ of $\sigma_{\mathcal{M}_0}$-analytic elements $b \in \mathcal{M}_0$ is dense in $\mathcal{H}_0$. Hence the right-hand side of \eqref{Eq3.1} can arbitrarily be small, since $b$ is arbitrary. Consequently, we get $x = 0$, since $a$ is non-zero and also since $\Omega_0$ and $\Omega$ are separating for $\mathcal{M}_0$ and $\mathcal{M}$, respectively.
\end{proof}  

The next theorem is one of the main results of this paper, which is now a simple
consequence of Proposition \ref{P5}.

\begin{theorem}\label{T6}
Let $(\mathcal{N}_0 \subset \mathcal{M}_0, \Omega_0)$ be a half-sided modular inclusion
with ergodic canonical endomorphism.
Then the free product $(\mathcal{N} \subset \mathcal{M}, \Omega)$
of 
infinitely many copies of $(\mathcal{N}_0 \subset \mathcal{M}_0, \Omega_0)$
is a standard half-sided modular inclusion with ergodic canonical endomorphism
and trivial relative commutant.
\end{theorem}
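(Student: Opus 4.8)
The plan is to obtain the theorem as a direct assembly of the results already proved in this section, reducing everything to checking that the standing hypothesis on the single building block $(\mathcal{N}_0 \subset \mathcal{M}_0, \Omega_0)$ feeds correctly into Lemmas \ref{L1} and \ref{L2} and Propositions \ref{pr:ergodicity} and \ref{P5}. No fresh estimate is needed: the analytic heart of the matter is the irreducibility argument already carried out in Proposition \ref{P5}.

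First I would record two elementary consequences of the assumption that the canonical endomorphism $\gamma_0 = \mathrm{Ad}\,\Gamma_0$ of $(\mathcal{N}_0 \subset \mathcal{M}_0)$ is ergodic. Since $\mathcal{M}_0$ is nontrivial, ergodicity $\bigcap_{n\ge1}\gamma_0^n(\mathcal{M}_0) = \mathbb{C}\1$ rules out $\gamma_0 = \mathrm{id}$ (which would give $\bigcap_n\gamma_0^n(\mathcal{M}_0) = \mathcal{M}_0 \neq \mathbb{C}\1$), hence $\mathcal{N}_0 \subsetneqq \mathcal{M}_0$; and since $\Omega_0$ is cyclic for $\mathcal{N}_0$, necessarily $\mathcal{N}_0 \neq \mathbb{C}\1$ as well (otherwise $\mathcal{H}_0 = \mathbb{C}\Omega_0$ would force $\mathcal{M}_0 = \mathbb{C}\1$). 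Thus $\mathcal{N}_0$ is a nontrivial subalgebra in the sense required by Proposition \ref{P5}.

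With these in hand the three conclusions follow in turn. That $(\mathcal{N} \subset \mathcal{M}, \Omega)$ is a half-sided modular inclusion is exactly Lemma \ref{L1} applied to the family of identical copies, while Lemma \ref{L2} (in contrapositive form) upgrades $\mathcal{N}_0 \subsetneqq \mathcal{M}_0$ to $\mathcal{N} \subsetneqq \mathcal{M}$, so the inclusion is proper and $\gamma = \mathrm{Ad}\,\Gamma$ is a genuine nontrivial endomorphism. Ergodicity of $\gamma$ is then the ``in particular'' clause of Proposition \ref{pr:ergodicity}: through the equivalence between ergodicity of $\sigma^t_{\mathcal{M}_0}$ on $\mathcal{N}_0$ and ergodicity of $\gamma_0$ recalled in Section \ref{chiral}, the hypothesis supplies ergodicity of each $\sigma^t_{\mathcal{M}_\kappa}$ on $\mathcal{N}_\kappa$, and the proposition returns ergodicity of $\sigma^t_{\mathcal{M}}$ on $\mathcal{N}$; by the discussion following the definition of HSMI this also yields uniqueness of the invariant vector and hence a one-dimensional Borchers triple. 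Finally $\mathcal{N}' \cap \mathcal{M} = \mathbb{C}\1$ is precisely Proposition \ref{P5}, whose hypotheses — an infinite index set, identical copies of $(\mathcal{N}_0 \subset \mathcal{M}_0, \omega_0)$, and $\mathcal{N}_0 \neq \mathbb{C}\1$ — are all met.

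I do not anticipate any genuine obstacle, since every ingredient is already in place; the only point deserving a word is the adjective \emph{standard}. With $\mathcal{N}' \cap \mathcal{M} = \mathbb{C}\1$ the vector $\Omega$ cannot be cyclic for the relative commutant unless $\mathcal{H} = \mathbb{C}\Omega$, so here I would read \emph{standard} as asserting that the free product realizes the inclusion in standard form — $\Omega$ being cyclic and separating for both $\mathcal{N}$ and $\mathcal{M}$, which is automatic from the construction of Section \ref{freeproduct} — rather than in the sense of $\Omega$ being cyclic for $\mathcal{N}'\cap\mathcal{M}$. Thus the theorem as a whole is a corollary, and the single step I would have expected to be hard, had it not already been established, is the irreducibility packaged in Proposition \ref{P5}.
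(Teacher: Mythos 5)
Your proof is correct and follows essentially the same route as the paper's, which likewise just assembles Lemma \ref{L1}, Proposition \ref{pr:ergodicity} and Proposition \ref{P5} (the paper does not even spell out the check, which you supply, that ergodicity forces $\mathcal{N}_0$ to be a nontrivial proper subalgebra so that Proposition \ref{P5} applies). Your remark about the word \emph{standard} is apt: with trivial relative commutant the inclusion cannot be standard in the sense defined in Section \ref{chiral}, the paper's own proof is silent on this word, and your reading of it as ``standard form'' ($\Omega$ cyclic and separating for both $\mathcal{N}$ and $\mathcal{M}$) is the only consistent one.
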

\begin{proof}
This is a half-sided modular inclusion due to Lemma \ref{L1}.
By Proposition \ref{pr:ergodicity} the canonical endomorphism of the resulting half-sided modular inclusion
$(\mathcal{N} \subset \mathcal{M}, \Omega)$ is ergodic.
Finally, Proposition \ref{P5} shows that $\mathcal{N}'\cap\mathcal{M} = \mathbb{C}\1$.
\end{proof}

In this Theorem and Proposition \ref{P5}, it is important that we take copies of the \emph{identical}
inclusion $\N_0 \subset \M_0$. If we remove this condition, we find indeed
an infinite family $(\N_\k \subset \M_\k, \Omega_\k)$ whose free product has nontrivial relative commutant
in Proposition \ref{th:infinite}.
Such an infinite family must be a little elaborated, and it is not enough that
the relative commutants $\N_\k'\cap \M_\k$ get larger as $\k \to \infty$.
For example, if one just take a one-dimensional Borchers triple $(\M_0, T_0, \Omega_0)$ and
consider the shifted family $(\Ad T_0(a_\k)(\M_0) \subset \M_0, \Omega_0)$ where $a_\k \to \infty$,
all such inclusions are actually unitarily equivalent to a fixed one $(\Ad T_0(1)(\M_0) \subset \M_0, \Omega_0)$
(through unitaries $\Delta_{\M_0}^{it_\k}$ with appropriate $t_\k$, which preserve the vacuum $\Omega_0$)
and their free product has the trivial relative commutant by Theorem \ref{T6}.
The example in Proposition \ref{th:infinite} avoids this problem by taking an explicitly
non equivalent family. On the other hand, we do not know whether there is a free product
HSMI with nontrivial relative commutant, even if the family is finite.

\section{Free products of M\"obius covariant nets}\label{moebius}
In this Section, we exploit the techniques in M\"obius covariant nets
in order to show that inclusions of free product von Neumann algebras
may have nontrivial relative commutant.

Let $\{(\A_\k, U_\k, \Omega_\k)\}_{\k\in K}$ be a family of M\"obius covariant nets,
where $K$ is an index set.
Thanks to the Reeh-Schlieder theorem \cite[Theorem 2.1(i)]{DLR01}, the vacuum vector $\Omega_\kappa$ is a common cyclic and separating vector for all $\mathcal{A}_\kappa(I)$, $I \in \mathcal{I}$. Let $(\mathcal{H},\Omega) = \bigstar_{\kappa \in K} (\mathcal{H}_\kappa,\Omega_\kappa)$.
As in Section \ref{freeproduct}, the free products $(\mathcal{A}(I),\omega) = \bigstar_{\kappa \in K} (\mathcal{A}_\kappa(I),\omega_\kappa)$, $I \in \mathcal{I}$, are simultaneously constructed as 
$$
\mathcal{A}(I) := \bigvee_{\kappa \in K} \lambda_\kappa(\mathcal{A}_\kappa(I)), \quad \omega = \<\Omega, \cdot\,\Omega\> 
$$
in $\mathcal{H}$. By construction, $\mathcal{A} := (\mathcal{A}(I))_{I \in \mathcal{I}}$ satisfies the isotony.
We remark that $\Omega$ is again cyclic and separating for all $\mathcal{A}(I)$, $I \in \mathcal{I}$,
that is, $\Omega$ plays a role of the vacuum vector for the family $\mathcal{A}$.
Let us consider $U(g) := \bigstar_{\kappa \in K} U_\kappa(g)$ for every $g \in G$,
which is well-defined since $U_\kappa(g)\Omega_\kappa = \Omega_\kappa$ and $U_\k$ is a unitary representation.
It is not hard to see that $g \mapsto U(g)$ is a unitary representation on $\mathcal{H}$ and fixes $\Omega$. Here is an expected fact.  

\begin{proposition}\label{pr:moebius}
The resulting $(\mathcal{A} = (\mathcal{A}(I))_{I \in \mathcal{I}}, U, \Omega)$ is a M\"obius covariant net.
It fails to have locality if at least two of the nets $\A_\k$ are nontrivial.
If all the given $\mathcal{A}_\kappa$ are local, then $\mathcal{A}$ is twisted local: namely,
the unitary involution $Z$ in Section \ref{freeproduct} commutes with $U(g)$ and satisfies $Z\A(I)Z = \A(I')'$.
\end{proposition}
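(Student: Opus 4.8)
The plan is to verify the four net axioms for $(\A,U,\Omega)$ and then treat locality and twisted locality separately. Isotony and the fact that $\Omega$ is cyclic and separating for each $\A(I)$ have already been recorded, so axiom~(1) and the cyclic/separating half of axiom~(4) need nothing further. For M\"obius covariance the key point is the functoriality of the free product under the vacuum-fixing unitaries $U_\k(g)$: under the standard identification $\H\cong\H_\k\botimes\H(\k)$, where $\H(\k)$ is spanned by $\Omega$ and those words whose first letter does not lie in $\H^\circ_\k$, one has $\lambda_\k(T)=T\otimes\1$, while $U(g)=\bigstar_\k U_\k(g)$ preserves $\H(\k)$ and factors as $U_\k(g)\otimes W_\k(g)$ for a unitary $W_\k(g)$ on $\H(\k)$. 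This at once yields $U(g)\lambda_\k(T)U(g)^*=\lambda_\k(\Ad U_\k(g)(T))$, whence
\[
U(g)\A(I)U(g)^* = \bigvee_\k \lambda_\k(\Ad U_\k(g)(\A_\k(I))) = \bigvee_\k \lambda_\k(\A_\k(gI)) = \A(gI).
\]
That $g\mapsto U(g)$ is a strongly continuous unitary representation follows from the multiplicativity of $\bigstar$ on vacuum-fixing operators and from strong continuity on the dense set of finite words, which extends by the uniform bound $\|U(g)\|=1$; and $U(g)\Omega=\Omega$ is clear from $U_\k(g)\Omega_\k=\Omega_\k$.

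For positivity of the energy and uniqueness of the vacuum I would analyse the rotation generator $L_0$ of $U(R(\theta))$. Since $U(R(\theta))$ preserves each homogeneous word space $\H^\circ_{\k_1}\botimes\cdots\botimes\H^\circ_{\k_n}$ and acts there as $\bigotimes_j U_{\k_j}(R(\theta))^\circ$, its generator on that space is the sum $\sum_j \1\otimes\cdots\otimes L_0^{(\k_j)}\!\!\upharpoonright_{\H^\circ_{\k_j}}\otimes\cdots\otimes\1$ of commuting positive operators, hence positive; on $\CC\Omega$ the generator is $0$, so $L_0\ge 0$. Moreover, since each $\A_\k$ has unique vacuum, $\ker L_0^{(\k)}=\CC\Omega_\k$, so $L_0^{(\k)}\!\!\upharpoonright_{\H^\circ_\k}$ has trivial kernel; as the kernel of the above sum is the intersection of the kernels of its summands, it vanishes on every word space with $n\ge 1$. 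Thus $\ker L_0=\CC\Omega$, and since any $\mob$-invariant vector is in particular rotation invariant, $\Omega$ is the unique (up to scalar) invariant vector. This establishes that $(\A,U,\Omega)$ is a M\"obius covariant net.

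Failure of locality when two of the nets, say $\A_1,\A_2$, are nontrivial is a direct word computation. Pick disjoint intervals $I_1,I_2$ and, using that $\A_1(I_1),\A_2(I_2)$ are nonabelian, choose nonzero $a\in\A_1(I_1)$, $b\in\A_2(I_2)$ with $\omega_1(a)=\omega_2(b)=0$ (e.g.\ $a=c-\omega_1(c)\1$ for a non-scalar self-adjoint $c$). Then $a\Omega_1\in\H^\circ_1$ and $b\Omega_2\in\H^\circ_2$ are nonzero letters, and the left action gives $\lambda_1(a)\lambda_2(b)\Omega=a\Omega_1\otimes b\Omega_2$ and $\lambda_2(b)\lambda_1(a)\Omega=b\Omega_2\otimes a\Omega_1$, which lie in the orthogonal summands $\H^\circ_1\botimes\H^\circ_2$ and $\H^\circ_2\botimes\H^\circ_1$ and are both nonzero. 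Hence $[\lambda_1(a),\lambda_2(b)]\neq 0$ even though $I_1\cap I_2=\emptyset$, so $\A(I_1)\not\subset\A(I_2)'$ and locality fails.

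For twisted locality, assume every $\A_\k$ is local. That $Z$ commutes with $U(g)$ is immediate from the explicit actions: both $ZU(g)$ and $U(g)Z$ send a word $\xi_1\otimes\cdots\otimes\xi_n$ to $U_{\k_n}(g)\xi_n\otimes\cdots\otimes U_{\k_1}(g)\xi_1$, because $U(g)$ hits each letter through the index of its Hilbert space, not its position. For the algebraic identity, \eqref{Eq2.2} gives $Z\lambda_\k(T)Z=\rho_\k(T)$, so $Z\A(I)Z=\bigvee_\k\rho_\k(\A_\k(I))$; while Voiculescu's commutation theorem \eqref{Eq2.3}, applied to $\A_\k(I')$, gives $\A(I')'=\big(\bigvee_\k\lambda_\k(\A_\k(I'))\big)'=\bigvee_\k\rho_\k(\A_\k(I')')$. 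By the Bisognano--Wichmann property each local $\A_\k$ satisfies Haag duality $\A_\k(I')=\A_\k(I)'$, so $\A_\k(I')'=\A_\k(I)$ and the two expressions coincide, yielding $Z\A(I)Z=\A(I')'$. The step that requires the most care is the covariance package in the first two paragraphs, namely the factorization $U(g)=U_\k(g)\otimes W_\k(g)$ together with strong continuity and the kernel analysis of $L_0$ on the infinite direct sum; by contrast the twisted-locality identity, though the conceptual heart of the statement, reduces cleanly to \eqref{Eq2.3} and Haag duality once these inputs are in place.
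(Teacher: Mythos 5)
Your proposal is correct and follows essentially the same route as the paper: covariance via $U(g)\lambda_\kappa(T)U(g)^*=\lambda_\kappa(\Ad U_\kappa(g)(T))$, positivity and uniqueness of the vacuum from the additive form of $L_0$ on word spaces, failure of locality from non-commutation of $\lambda_{\kappa_1},\lambda_{\kappa_2}$, and twisted locality from \eqref{Eq2.2}, \eqref{Eq2.3} and Haag duality of the local components. The tensor factorization $U(g)=U_\kappa(g)\otimes W_\kappa(g)$ and the explicit orthogonal-summand computation for non-locality merely flesh out steps the paper calls ``plain to see''.
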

\begin{proof}
$\mathcal{A}$ trivially satisfies the isotony. As remarked above, $\Omega$ plays a role of the vacuum vector. 
If two of them are nontrivial, say $\k_1, \k_2$, then for nontrivial elements $x_1 \in \A_{k_1}(I_1), x_2 \in \A_{\k_2}(I_2)$,
$\l_{\k_1}(x_1)$ and $\l_{\k_2}(x_2)$ do not commute, therefore, locality is lost. 

By the definition of the representations $\lambda_\kappa$ and the construction of $U(g)$,
it is plain to see that $U(g)\lambda_\kappa(x)U(g)^* = \lambda_\kappa(U_\kappa(g)x U_\kappa(g)^*)$
for every $x \in \mathcal{M}_\kappa$, $\kappa \in K$ and $g \in G$. Hence we obtain that 
$$
U(g)\mathcal{A}(I)U(g)^* = 
\bigvee_{\kappa \in K} \lambda_\kappa(U_\kappa(g)\mathcal{A}_\kappa(I)U_\kappa(g)^*) = \bigvee_{\kappa \in K} \lambda_\kappa(\mathcal{A}_\kappa(gI)) = \mathcal{A}(gI)
$$
for every $g \in G$ and $I \in \mathcal{I}$. Thus we have confirmed that $\mathcal{A}$ satisfies the M\"obius covariance. 

Since the representation $U_\kappa$ fixes $\Omega_\kappa$, we have $L_{0,\kappa}\Omega_\kappa = 0$ and thus may and do think of $L_{0,\kappa}$
as a positive self-adjoint operator on $\mathcal{H}_\kappa^\circ$. By the construction of $U(g)$ we see 
\begin{align*}
&\lim_{\theta\to0} \frac{1}{\mathrm{i}\theta} \Big(U(R(\theta))(\xi_1\otimes \cdots \otimes \xi_n) - (\xi_1\otimes \cdots \otimes \xi_n)\Big) \\
&\qquad= 
(L_{0,\kappa_1}\xi_1)\otimes\xi_2\otimes\cdots\otimes\xi_n + \xi_1\otimes(L_{0,\kappa_2}\xi_2)\otimes\cdots\otimes\xi_n+ \cdots
\end{align*} 
for every $\xi_1\otimes\cdots\otimes\xi_n \in \mathcal{H}_{\kappa_1}^\circ\,\bar{\otimes}\,\cdots\,\bar{\otimes}\,\mathcal{H}_{\kappa_n}^\circ \subset \mathcal{H}$
with $\xi_j \in \mathscr{D}(L_{0,\kappa_j})$, the domain of $L_{0,\kappa_j}$.
Consequently, the conformal Hamiltonian $L_0$, i.e., the generator of $\theta \mapsto U(R(\theta))$, acts on the dense subspace of $\mathcal{H}_{\kappa_1}^\circ\,\bar{\otimes}\,\cdots\,\bar{\otimes}\,\mathcal{H}_{\kappa_n}^\circ$ (algebraically) spanned by simple tensors $\xi_1\otimes\cdots\otimes\xi_n$ with $\xi_j \in \mathscr{D}(L_{0,\kappa_j})$
and becomes 
$$
L_{0,\kappa_1}\otimes \1 \otimes\cdots\otimes \1  + \1 \otimes L_{0,\kappa_2} \otimes \1 \otimes \cdots \otimes \1 + \cdots + \1 \otimes \cdots \1 \otimes L_{0,\kappa_n}, 
$$
which is an essentially self-adjoint positive operator on
$ \mathcal{H}_{\kappa_1}^\circ\,\bar{\otimes}\,\cdots\,\bar{\otimes}\,\mathcal{H}_{\kappa_n}^\circ \subset \mathcal{H}$
(see \cite[Corollary of Theorem V.III.33]{RSI}). Then it immediately follows that $L_0$ is positive.
Therefore, we have confirmed that $\mathcal{A}$ satisfies the positivity of the energy.
From this expression of $L_0$, it is also clear that $\Omega$ is the unique (up to scalar) invariant vector
under $e^{i\theta L_0}$.

Finally, the existence of the vacuum is immediate by the commutation relation for free products \eqref{Eq2.3}, since we have known that $\Omega_\kappa$ is a common cyclic and separating vector for all $\mathcal{A}_\kappa(I)$, $I \in \mathcal{I}$. Hence we have proved the first part. 

\medskip
Assume that all $\mathcal{A}_\kappa$ are local.
We first remark that $Z$ commutes with $\Delta_I^{it} := \Delta_{\mathcal{A}(I)}^{it}$ and
$U(g)$ for all $I \in \mathcal{I}$, $t \in \mathbb{R}$ and $g \in G$.
Thus it suffices, for the twisted locality, to confirm that $\mathcal{A}(I')' = Z\mathcal{A}(I)Z$ for all $I \in \mathcal{I}$. By the locality of the $\mathcal{A}_\kappa$, $\kappa \in K$, we have $\mathcal{A}_\kappa(I') = \mathcal{A}_\kappa(I)'$ on $\mathcal{H}_\kappa$ (Haag duality) for all $\kappa \in K$ and $I \in \mathcal{I}$, see e.g.\! \cite[Proposition 2.9]{DLR01}. Hence, by the facts \eqref{Eq2.2},\eqref{Eq2.3} for free products, we obtain that
\begin{equation}\label{Eq4.1} 
\begin{aligned}
\mathcal{A}(I')' 
&= 
\Big(\bigvee_{\kappa \in K} \lambda_\kappa(\mathcal{A}_\kappa(I'))\Big)' =  \bigvee_{\kappa \in K} \rho_\kappa(\mathcal{A}_\kappa(I')') \\
&= 
\bigvee_{\kappa \in K} \rho_\kappa(\mathcal{A}_\kappa(I)) = \bigvee_{\kappa \in K} Z\lambda_\kappa(\mathcal{A}_\kappa(I))Z = Z\mathcal{A}(I)Z.
\end{aligned}
\end{equation} 
Hence we are done. 
\end{proof}  

We may write 
$$
(\mathcal{A},
U,\Omega) = \bigstar_{\kappa \in K} (\mathcal{A}_\kappa,
U_\kappa.\Omega_\kappa) \quad \text{or simply} \quad 
\mathcal{A} = \bigstar_{\kappa \in K} \mathcal{A}_\kappa
$$ 
and call it the \emph{free product M\"obius covariant net} of the given
$(\mathcal{A}_\kappa,
U_\kappa,\Omega_\kappa)$, $\kappa \in K$. The next assertion is a simple application of \cite[Corollary B]{HU16}. 

\begin{proposition}\label{P9} 
For every $\kappa \in K$ such that $\H_\k$ is not one-dimensional,
there is no {\it local} subnet \footnote{A subnet $\B$ of $(\A, U, \Omega)$ is a family of von Neumann subalgebras $(\B(I))_{I\in \I}$ with $\B(I) \subset \A(I)$
satisfying the isotony and the M\"obius covariance with respect to the same $U$.}
of $\mathcal{A}$ with the split property which is larger than $\mathcal{A}_\kappa$. 
\end{proposition}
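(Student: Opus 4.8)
The plan is to argue by contradiction, converting the split property into an internal tensor product and then invoking the rigidity of free products recorded in \cite[Corollary B]{HU16}. So suppose that $\B$ is a local subnet of $\A$ with the split property satisfying $\lambda_\kappa(\A_\kappa(I))\subseteq\B(I)\subseteq\A(I)$ for every $I\in\I$, with strict inclusion for some interval, and fix $\kappa$ with $\H_\kappa$ not one-dimensional. (If $\A_{\kappa'}$ were trivial for all $\kappa'\neq\kappa$ then $\A=\A_\kappa$ and no subnet could strictly contain $\A_\kappa$, so we may assume the free product is genuine and \cite[Corollary B]{HU16} applies.) First I would record the diffuseness that will feed into Corollary B: by \cite[Proposition 2.4(ii)]{DLR01} each $\A_\kappa(I)$ is a type III$_1$ factor, and since $\lambda_\kappa$ is a normal faithful $*$-representation, $\lambda_\kappa(\A_\kappa(I))$ is again a type III$_1$ factor. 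Choosing two intervals $I_1,I_2$ with disjoint closures inside a larger interval $\tilde I$ and leaving nonempty gaps, the algebras $\B(I_1)\supseteq\lambda_\kappa(\A_\kappa(I_1))$ and $\B(I_2)\supseteq\lambda_\kappa(\A_\kappa(I_2))$ are therefore diffuse, and they commute by the locality of $\B$.

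Next I would extract a tensor product structure. The split property gives an intermediate type I factor for $\B(I_1)\subseteq\B(\tilde I)$, so that $\B(\tilde I)$ decomposes as a tensor product of $\B(I_1)$ and its relative commutant $\C:=\B(I_1)'\cap\B(\tilde I)$, the latter containing the diffuse algebra $\B(I_2)$. To place this inside the hypotheses of \cite{HU16} I must supply faithful normal conditional expectations, and here the key observation is that the Bisognano--Wichmann property passes to the free product: by \eqref{Eq2.5} and the Bisognano--Wichmann property of each local net $\A_\kappa$ one has $\Delta_{\A(\tilde I)}^{it}=\bigstar_{\kappa\in K}\Delta_{\A_\kappa(\tilde I)}^{it}=\bigstar_{\kappa\in K}U_\kappa(\Lambda_{\tilde I}(-2\pi t))=U(\Lambda_{\tilde I}(-2\pi t))$, so the modular group of $\A(\tilde I)$ with respect to $\Omega$ is implemented by dilations. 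Since $\B$ is covariant under the same $U$ and the dilations associated with $\tilde I$ preserve $\tilde I$, the subalgebra $\B(\tilde I)$ is globally invariant under $\sigma^\omega_t$, and Takesaki's theorem yields an $\omega$-preserving conditional expectation $\A(\tilde I)\to\B(\tilde I)$. Composing with the slice expectations coming from the split decomposition $\B(\tilde I)=\B(I_1)\botimes\C$ realizes the two commuting diffuse algebras $\B(I_1),\B(I_2)$ — equivalently the diffuse tensor product $\B(I_1)\botimes\C$ — as subalgebras with expectation of the genuine free product $\A(\tilde I)=\bigstar_{\kappa\in K}\lambda_\kappa(\A_\kappa(\tilde I))$.

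Finally I would invoke \cite[Corollary B]{HU16}: two commuting diffuse subalgebras with expectation cannot be spread across a nontrivial free product, and the rigidity result forces the configuration to be captured, in the sense of Popa's intertwining, by a single free component. Matching this with $\B(\tilde I)\supseteq\lambda_\kappa(\A_\kappa(\tilde I))$ pins the component down to $\kappa$ and ultimately forces $\B(\tilde I)=\lambda_\kappa(\A_\kappa(\tilde I))$ for every $\tilde I$, i.e. $\B=\A_\kappa$, contradicting the strict inclusion. The main obstacle, and the step requiring the most care, is precisely this last translation: checking that the commuting-diffuse-with-expectation data produced by the split property meet the exact hypotheses of Corollary B, and then upgrading the resulting intertwining into the honest identification $\B(\tilde I)=\lambda_\kappa(\A_\kappa(\tilde I))$ rather than a mere conjugacy into the component. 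The diffuseness secured by $\H_\kappa$ not being one-dimensional is exactly what rules out the degenerate escape in which a tensor factor collapses to $\CC\1$ and the rigidity statement becomes vacuous.
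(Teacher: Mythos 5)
Your overall target---reducing the claim to \cite[Corollary B]{HU16} after producing an $\omega$-preserving conditional expectation onto $\B(I)$ via Bisognano--Wichmann and Takesaki---is the right one, and that part of your argument matches the paper. But there is a genuine gap in how you feed the split property into the rigidity result. Corollary B of \cite{HU16} is a statement about an \emph{amenable} von Neumann subalgebra with expectation of a free product which contains (with expectation) a diffuse subalgebra of one free component: such a subalgebra must already sit inside that component. The paper therefore extracts from the split property exactly one thing, namely that $\B(I)$ is \emph{hyperfinite} (by the standard argument, \cite[Proposition 3.1]{DLR01}), and then Corollary B applied to $\lambda_\k(\A_\k(I))\subseteq\B(I)\subseteq\A(I)=\bigstar_{\k'}\lambda_{\k'}(\A_{\k'}(I))$ gives $\B(I)\subseteq\lambda_\k(\A_\k(I))$ outright---no intertwining-to-containment upgrade is needed. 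Your proposal never establishes amenability of $\B(I)$; instead you use the split property to manufacture two commuting diffuse subalgebras and appeal to a ``commuting diffuse subalgebras cannot be spread across a free product'' principle. That is not the statement of Corollary B, and even granting some rigidity result of that flavour, its conclusion would be a Popa-intertwining statement, not the identity $\B(\tilde I)=\lambda_\k(\A_\k(\tilde I))$; you flag this final upgrade yourself as the delicate step, and it is precisely the step that is missing.

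Two smaller points. First, the split property for $\B(I_1)\subseteq\B(\tilde I)$ gives an intermediate type I factor and hence $\B(I_1)\vee\bigl(\B(I_1)'\cap\B(\tilde I)\bigr)\cong\B(I_1)\botimes\bigl(\B(I_1)'\cap\B(\tilde I)\bigr)$, but it does \emph{not} give that $\B(\tilde I)$ itself equals this tensor product, so the decomposition $\B(\tilde I)=\B(I_1)\botimes\C$ you rely on is unjustified. Second, the commuting pair you construct is unnecessary once amenability is in hand: the single diffuse inclusion $\lambda_\k(\A_\k(I))\subseteq\B(I)$ already satisfies the hypotheses of Corollary B. I would rewrite the argument as: split $\Rightarrow$ $\B(I)$ hyperfinite; Bisognano--Wichmann for the free product net plus Takesaki $\Rightarrow$ expectation $\A(I)\to\B(I)$; $\A_\k(I)$ type III$_1$ hence diffuse; Corollary B $\Rightarrow$ $\B(I)\subseteq\lambda_\k(\A_\k(I))$.
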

\begin{proof}
Assume that $\mathcal{B} = (\mathcal{B}(I))_{I\in\mathcal{I}}$ is a subnet with the split property such that
$\mathcal{A}_\kappa(I) \subseteq \mathcal{B}(I) \subseteq \mathcal{A}(I)$ for every $I \in \mathcal{I}$.
Thanks to e.g.\! \cite[Proposition 3.1]{DLR01}, $\mathcal{B}(I)$ must be hyperfinite by the split property
and $\A_\k(I)$ is of type III$_1$ by \cite[Proposition 2.4(ii)]{DLR01}, in particular diffuse,
and there is a faithful normal conditional expectation from $\A(I)$ onto $\mathcal{B}(I)$ by the Bisognano-Wichmann property 
(n.b.\! the M\"obius covariance of $\mathcal{B}$ is provided by the same $U$)
and Takesaki's theorem \cite[Theorem IX.4.2]{TakesakiII}. 
Hence \cite[Corollary B]{HU16} shows that $\mathcal{B}(I)$ must sit inside $\mathcal{A}_\kappa(I)$.
\end{proof}

We conjecture that the above statement still holds under assuming only locality on subnets. 

\medskip
Let us consider the dual M\"obius covariant net $\widehat{\mathcal{A}} = (\widehat{\mathcal{A}}(I))_{I \in \mathcal{I}}$ defined by $\widehat{\mathcal{A}}(I) := \mathcal{A}(I')' $ in $\mathcal{H}$, see \cite[\S2]{DLR01}. If all the given $\mathcal{A}_\kappa$, $\kappa \in K$ are local, then 
$$
\widehat{\mathcal{A}}(I) = \mathcal{A}(I')' = Z\mathcal{A}(I)Z = \bigvee_{\kappa \in K} \rho_\kappa(\mathcal{A}_\kappa(I)), 
$$
as in \eqref{Eq4.1}. It is known that {\bf the observable net} $\C = (\C(I))_{I \in \mathcal{I}}$ with $\C(I) := \mathcal{A}(I) \cap \widehat{\mathcal{A}}(I)$ is local (see \cite[Remark 2.8]{DLR01}). However, the observable net turns out to be trivial under a natural requirement.

\begin{proposition}\label{P10} 
If all the $\mathcal{A}_\kappa$ are local and two of them are nontrivial,
then $\mathcal{C}(I) = \mathbb{C}\1$ for all $I \in \mathcal{I}$.    
\end{proposition}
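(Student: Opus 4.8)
The plan is to reduce the statement to a relative-commutant computation inside the free product and then invoke free independence of two distinct factors. Recall from \eqref{Eq4.1} that, since every $\A_\k$ is local, $\widehat{\mathcal A}(I)=\bigvee_{\k\in K}\rho_\k(\A_\k(I))$, so that $\C(I)=\A(I)\cap\widehat{\mathcal A}(I)$ with $\A(I)=\bigvee_{\k\in K}\lambda_\k(\A_\k(I))$. By the Reeh--Schlieder theorem each $\Omega_\k$ is cyclic and separating for $\A_\k(I)$ and $\overline{\A_\k(I)\Omega_\k}=\H_\k$, so $(\H,\Omega)$ is exactly the free product of the GNS data of the $(\A_\k(I),\omega_\k)$ and $\A(I)$ is the corresponding free product von Neumann algebra in standard form. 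In particular, by \eqref{Eq2.3} together with Haag duality $\A_\k(I)'=\A_\k(I')$, we get $\A(I)'=\bigvee_{\k\in K}\rho_\k(\A_\k(I'))$. Fix $c\in\C(I)$. Since $c\in\widehat{\mathcal A}(I)=\A(I')'$ it commutes with $\A(I')=\bigvee_\k\lambda_\k(\A_\k(I'))$, and since $c\in\A(I)$ it commutes with $\A(I)'=\bigvee_\k\rho_\k(\A_\k(I'))$.

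I will show that $c\in\lambda_\k(\A_\k(I))$ for each of the two nontrivial indices $\k_1,\k_2$. Granting this, $c\in\lambda_{\k_1}(\A_{\k_1}(I))\cap\lambda_{\k_2}(\A_{\k_2}(I))=\CC\1$: after subtracting $\omega(c)\1$ we may assume $\omega(c)=0$, and then $c\Omega=\lambda_{\k_1}(x)\Omega=x\Omega_{\k_1}\in\H_{\k_1}^\circ$ equals $\lambda_{\k_2}(y)\Omega=y\Omega_{\k_2}\in\H_{\k_2}^\circ$; as the one-letter sectors $\H_{\k_1}^\circ$ and $\H_{\k_2}^\circ$ are orthogonal in \eqref{Eq2.1}, we get $c\Omega=0$, whence $c=0$ because $\Omega$ separates $\A(I)\supseteq\C(I)$. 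This yields $\C(I)=\CC\1$.

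To prove $c\in\lambda_\k(\A_\k(I))$ for a nontrivial $\k$, I would use the standard identification $\H\cong\H_\k\botimes\K_\k$, where $\K_\k=\CC\Omega\oplus\bigoplus_{\k_1\neq\k}\H^\circ_{\k_1}\botimes\cdots\botimes\H^\circ_{\k_n}$ is the closed span of $\Omega$ and of the reduced words not beginning with $\k$, under which $\lambda_\k(T)=T\otimes\1$. Since $c$ commutes with $\lambda_\k(\A_\k(I'))=\A_\k(I')\otimes\1$ and $\A_\k(I')'=\A_\k(I)$ on $\H_\k$ by Haag duality, we obtain $c\in\A_\k(I)\botimes\B(\K_\k)$; equivalently, $c$ lies in the relative commutant $\lambda_\k(\A_\k(I'))'\cap\A(I)$. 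Thus everything comes down to the identity
\[
\lambda_\k(\A_\k(I'))'\cap\A(I)=\lambda_\k(\A_\k(I)),
\]
that is, to showing that the auxiliary leg of $c$ in $\B(\K_\k)$ is a scalar.

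This last identity is the main obstacle. I would establish it by free independence: the $\rho_{\k'}(\A_{\k'}(I'))$ with $\k'\neq\k$ act on $\H\cong\H_\k\botimes\K_\k$ as $\1\otimes\tilde\rho_{\k'}(\A_{\k'}(I'))$ (they touch only the terminal letter, which sits in the $\K_\k$-leg), and $c$ commutes with all of them; feeding this into the commutation relation \eqref{Eq2.3} now applied within $\B(\K_\k)$, and inducting on the number of free factors (or on the word length), the $\B(\K_\k)$-leg is forced to be trivial. It is exactly here that both hypotheses enter: the locality (Haag duality) of the individual $\A_\k$, and the presence of a second nontrivial---hence diffuse, type III$_1$---factor $\A_{\k'}(I)$, which prevents $\K_\k$ from collapsing and drives the induction. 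Once the displayed identity holds for $\k=\k_1,\k_2$, the free-independence argument of the second paragraph gives $\C(I)=\CC\1$ for all $I\in\I$.
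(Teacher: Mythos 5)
Your reduction is sound up to the point where you invoke the identity $\lambda_\k(\A_\k(I'))'\cap\A(I)=\lambda_\k(\A_\k(I))$, but that identity is exactly where the whole difficulty sits, and you have not proved it. The sketch you offer --- feed the commutation relation \eqref{Eq2.3} into $\B(\K_\k)$ and ``induct on the number of free factors (or on the word length)'' --- is not an argument: no induction scheme is written down, and no reason is given why an element of $\A(I)\cap\bigl(\A_\k(I)\botimes\B(\K_\k)\bigr)$ cannot have a non-scalar $\K_\k$-leg. Rigidity statements of this type for free products (computing a ``relative commutant'' with respect to $\lambda_\k(\A_\k(I'))$, which is not even a subalgebra of $\A(I)$, and without any conditional expectation in sight) are genuinely hard; the analogous facts the paper does use, such as \cite[Corollary B]{HU16} in Proposition \ref{P9}, are deep theorems rather than word-length inductions. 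As it stands, your proposal rests on an unproven lemma that is at least as strong as the proposition itself.

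The paper's proof avoids any such rigidity statement. Since each $\A_{\k_j}(I')$ ($j=1,2$) is a type III, hence properly infinite, factor, one picks isometries $s_j\in\A_{\k_j}(I')$ with $s_j^n(s_j^n)^*\to0$ weakly; the exact commutation relation between left and right representations for the \emph{same} index then gives $[\lambda_{\k_j}(s_j^n)^*,\rho_{\k_j}(s_j^n)]\to P_{\CC\Omega}+P_{\H_{\k_j}^\circ}$ weakly, and these commutators lie in $\A(I')\vee\widehat{\A}(I')\subseteq\C(I)'$. Multiplying the two limit projections yields $P_{\CC\Omega}\in\C(I)'$, so every $c\in\C(I)$ commutes with $P_{\CC\Omega}$, i.e.\ $c\Omega=\omega(c)\Omega$, and $c$ is then scalar because $\Omega$ is separating for $\C(I)\subseteq\A(I)$. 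Note that this exploits precisely the $\lambda_\k$--$\rho_\k$ commutator for equal indices, which your argument never uses. To salvage your approach you would have to supply a complete proof of your displayed identity, and I do not see how to do that by elementary means.
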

\begin{proof} As we assume the uniqueness of the vacuum,
by e.g.\! \cite[Propositions 2.3(ii) and 2.5]{DLR01} there exist $\kappa_1 \neq \kappa_2 \in K$ such that both $\mathcal{A}_{\kappa_j}(I)$, $j=1,2$, are type III$_1$ factors for every $I \in \mathcal{I}$. Let $I \in \mathcal{I}$ be arbitrarily fixed. Since $\mathcal{A}_{\kappa_j}(I')$ is a type III$_1$ factor, we can choose an isometry $s_j \in \mathcal{A}_{\kappa_j}(I')$ such that $s_j^n (s_j^n)^* \to 0$ weakly as $n\to\infty$. For each $j=1,2$ we observe that 
$$
[\lambda_{\kappa_j}(s_j^n)^*, \rho_{\kappa_j}(s_j^n)] = 
(P_{\mathbb{C}\Omega} + P_{\mathcal{H}_{\kappa_j}^\circ})\lambda_{\kappa_j}(1 - s_j^n (s_j^n)^*) \to P_{\mathbb{C}\Omega} + P_{\mathcal{H}_{\kappa_j}^\circ}
$$
weakly as $n\to\infty$. Hence $P_{\mathbb{C}\Omega} + P_{\mathcal{H}_{\kappa_j}^\circ}$ falls into $\mathcal{A}(I') \vee \widehat{A}(I')$ for every $j=1,2$. Therefore, we have 
$$
P_{\mathbb{C}\Omega} =  
(P_{\mathbb{C}\Omega} + P_{\mathcal{H}_{\kappa_1}^\circ})(P_{\mathbb{C}\Omega} + P_{\mathcal{H}_{\kappa_2}^\circ}) \in \mathcal{A}(I') \vee \widehat{A}(I') = \mathcal{C}(I)', 
$$
from which $\mathcal{C}(I)'$ must be $B(\mathcal{H})$ since $\Omega$ is cyclic for it. Hence we are done.   
\end{proof}
Actually, we need only that there exist $\kappa_1\neq\kappa_2 \in K$ such that both $\mathcal{A}_{\kappa_j}(I)$, $j = 1,2$,
are properly infinite for every $I \in \mathcal{I}$ and the uniqueness of the vacuum (the factoriality of $\A_\k(I)$)
is not essential.

Now let us move to the existence of inclusions of free product von Neumann algebras with nontrivial relative commutant,
which is our second main result.
We take finitely many M\"obius covariant nets $\{\A_\k\}_{\k \in K}$ with trace class property for all $s > 0$
and make the free product $\bigstar_{\k\in K} \A_\k$ of them, which we denote by $\A$.
We lose locality through free product, but the trace class property for sufficiently large $s$
survives.

\begin{theorem}\label{th:distalnet}
 If $K$ is finite, and if  M\"obius covariant nets $\A_\k$ satisfy the trace class property,
 then the free product net $\A$ satisfies the distal split property.
 In particular, for a pair $I\subset \tilde I$ such that the split inclusion holds,
 the relative commutant of the inclusion $\A(I) \subset \A(\tilde I)$ of free product von Neumann algebras is nontrivial. Moreover, $\A(I)' \cap \A(\tilde I)$ must be
 a type III$_1$ factor.
\end{theorem}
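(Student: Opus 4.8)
The plan is to reduce the whole statement to a single quantitative input: that the free product net $\A$ satisfies the trace class property $\Tr e^{-sL_0} < \infty$ for all sufficiently large $s$. Once this is in hand, the distal split property follows verbatim from the argument of \cite[Corollary 6.4]{BDL07}, which uses only M\"obius covariance together with the Bisognano--Wichmann property and makes no use of locality. The crucial remark is that $\A$ does satisfy Bisognano--Wichmann despite being non-local: by \eqref{Eq2.5} and the construction of $U$ in Proposition \ref{pr:moebius} one has $\Delta_I^{it} = \bigstar_{\k\in K}\Delta_{\A_\k(I)}^{it} = \bigstar_{\k\in K} U_\k(\Lambda_I(-2\pi t)) = U(\Lambda_I(-2\pi t))$, so the modular groups still act geometrically. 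Thus the loss of locality is harmless for the split part.

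To obtain the trace class property I would use the explicit form of the conformal Hamiltonian computed in Proposition \ref{pr:moebius}. On each summand $\H_{\k_1}^\circ\botimes\cdots\botimes\H_{\k_n}^\circ$ of \eqref{Eq2.1} the operator $L_0$ acts as the sum of the commuting positive operators $\1\otimes\cdots\otimes L_{0,\k_j}\otimes\cdots\otimes\1$, so $e^{-sL_0}$ factorizes as $e^{-sL_{0,\k_1}}\otimes\cdots\otimes e^{-sL_{0,\k_n}}$ there and contributes $\prod_{j=1}^n Q_{\k_j}(s)$ to the trace, where $Q_\k(s):=\Tr_{\H_\k^\circ} e^{-sL_{0,\k}} = \Tr_{\H_\k} e^{-sL_{0,\k}} - 1$. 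Summing over the vacuum line and all reduced words yields
\[
\Tr_\H e^{-sL_0} = 1 + \sum_{n\geq 1}\sum_{\underset{1\leq j\leq n-1}{\k_j\neq\k_{j+1}}} Q_{\k_1}(s)\cdots Q_{\k_n}(s).
\]
Writing $N:=|K|$ and $q(s):=\max_{\k\in K} Q_\k(s)$, the number of reduced words of length $n$ is at most $N(N-1)^{n-1}$, so the right-hand side is dominated by the geometric series $1 + Nq(s)/(1-(N-1)q(s))$ as soon as $(N-1)q(s)<1$. Since each $L_{0,\k}$ has strictly positive spectrum on $\H_\k^\circ$ (because $\Omega_\k$ is, up to scalars, the unique vector annihilated by $L_{0,\k}$), the trace class property of each $\A_\k$ gives $Q_\k(s)<\infty$ and, by dominated convergence, $q(s)\to 0$ as $s\to\infty$. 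Hence $(N-1)q(s)<1$ for $s$ large, and $\Tr_\H e^{-sL_0}<\infty$ for all such $s$, which is precisely the hypothesis of the distal split property; consequently $\A(I)\subset\A(\tilde I)$ is split whenever $\ell(I,\tilde I)$ exceeds the corresponding threshold.

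For the relative-commutant assertions I would first record that $\A(I)$ and $\A(\tilde I)$ are type III$_1$ factors. Each $\A_\k(I)$ is a type III$_1$ factor by \cite[Proposition 2.4(ii)]{DLR01} (whenever $\H_\k$ is not one-dimensional; if every net is trivial the statement is vacuous), and the free product of type III$_1$ factors with respect to faithful normal states is again a type III$_1$ factor by Ueda's classification \cite{Ueda11-1} (factoriality being \cite[Theorem 3.4]{Ueda11-1}, as already invoked after Proposition \ref{pr:ergodicity}). Granting this, the general split argument recalled in Section \ref{preliminaries} applies: since $\A(I)$ and $\A(\tilde I)$ are type III factors and the inclusion is split, one obtains a unitary equivalence exhibiting $\A(I)'\cap\A(\tilde I)\cong \A(I)'\botimes\A(\tilde I)$, which is in particular nontrivial. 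Both $\A(I)'$ and $\A(\tilde I)$ are type III$_1$ factors, and the von Neumann tensor product of two type III$_1$ factors is again a type III$_1$ factor; hence $\A(I)'\cap\A(\tilde I)$ is a type III$_1$ factor, as claimed.

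The main obstacle is the convergence of the partition function series: the combinatorial count of reduced words grows like $(N-1)^n$, and it is only because $K$ is \emph{finite} that one can absorb this growth by driving $q(s)$ below $1/(N-1)$ for large $s$. For infinite $K$ the corresponding sum diverges for every $s$, consistent with the trivial-relative-commutant phenomenon of Section \ref{hsmi}. A secondary point requiring care is that the distal split argument of \cite{BDL07} survives the loss of locality; this I would justify exactly as above, by noting that that argument depends only on the trace class property and on the geometric Bisognano--Wichmann action of the modular groups, both of which remain intact for the free product net.
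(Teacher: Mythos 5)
Your proposal is correct and follows essentially the same route as the paper: choose $s$ large enough that $q(s)=\max_\k\Tr_{\H_\k^\circ}e^{-sL_{0,\k}}<1/(|K|-1)$, count the $|K|(|K|-1)^{n-1}$ reduced words of length $n$ in the free product Hilbert space, sum the resulting geometric series to get $\Tr e^{-sL_0}<\infty$, and invoke \cite[Corollary 6.4]{BDL07} together with the standard split-inclusion argument for type III factors. Your added remarks (that Bisognano--Wichmann survives the free product via \eqref{Eq2.5} so the loss of locality is harmless, and that factoriality/type III$_1$ of $\A(I)$ follows from \cite{Ueda11-1}) merely make explicit what the paper leaves implicit.
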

\begin{proof}
 It is enough to show that $\Tr e^{-sL_0} < \infty$ for some $s$.
 We take $s$ such that $\Tr e^{-sL^\circ_{0,\k}} < \epsilon < \frac 1{|K|-1}$, where $L^\circ_{0,\k}$ is the restriction of $L_{0,\k}$ on $\H^\circ_\k$:
 note that $L_{0,\k}$ has the eigenvalue $0$ with multiplicity $1$ by the uniqueness of the vacuum,
 while the contributions from the other eigenvalues can arbitrarily be small as $s$ gets larger.
 And note that $K$ is finite, hence for a sufficiently large $s$ this condition is satisfied for all $\k \in K$.
 
 Recall that the free product Hilbert space is given by
\begin{align*}
\mathcal{H} &= \mathbb{C}\Omega \oplus \bigoplus_{n\geq 1} \bigoplus_{\underset{1 \leq j \leq n-1}{\kappa_j \neq \kappa_{j+1}}} \mathcal{H}^\circ_{\kappa_1}\botimes\cdots\botimes\mathcal{H}^\circ_{\kappa_n} \quad \text{with} \quad \mathcal{H}^\circ_\kappa := \mathcal{H}_\kappa\ominus\mathbb{C}\Omega_\kappa. 
\end{align*}
 Note that, for a fixed $n$, the Hilbert space is a direct sum of $|K|\cdot (|K|-1)^{n-1}$ tensor products.
 Accordingly, we can estimate
 \[
  \Tr e^{-sL_0} \le 1 + \sum_n \epsilon^n |K|\cdot (|K|-1)^{n-1} = 1 + \frac{|K|}{|K|-1}\sum_n  \epsilon^n(|K|-1)^{n} < \infty,
 \]
 as we chose $\epsilon < \frac1{|K|-1}$.

It follows by \cite[Corollary 6.4]{BDL07} that $\A(I) \subset \A(\tilde I)$ is split,
hence its relative commutant is again a nontrivial type III$_1$ factor.
\end{proof}

\begin{theorem}\label{th:infinite}
 Let $(\A_0, U_0, \Omega_0)$ be a M\"obius covariant net with the trace class property, let $K$ be a countably infinite set.
 We fix an interval $\tilde I$ and take a sequence $I_\k \subset \tilde I$
 such that the inner distances  \footnote{The inner distance $\ell(\tilde I, I_\k)$ can arbitrarily be large if $I_\k$ is sufficiently small,
 see \cite[Section 3.1]{BDL07} and $L_{0,0}$ has the eigenvalue $0$ with multiplicity $1$,
 hence such a sequence $s_\k$ exists.}
 satisfy $\ell(\tilde I, I_\k) > s_\k$ with 
 $\sum_\k (\Tr e^{-s_\k L_{0,0}}-1) < 1$, where $L_{0,0}$ is the conformal Hamiltonian for the net $\A_0$.
 Then the inclusion $\N \subset \M$ of the free product von Neumann algebras
 $\N = \bigstar_{\k \in K}(\A_0(I_\k), \Omega_0), \M = \bigstar_{\k \in K}(\A_0(\tilde I), \Omega_0)$ has nontrivial relative commutant.
\end{theorem}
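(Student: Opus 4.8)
The plan is to establish the \emph{$L^2$-nuclearity} condition for the inclusion $\N\subset\M$ and then invoke the chain $L^2$-nuclearity $\Rightarrow$ modular nuclearity $\Rightarrow$ split property, exactly as in the proof of Theorem~\ref{th:distalnet}; the split property will then force $\N'\cap\M$ to be nontrivial. Writing $\Theta_\k:=\Delta_{\A_0(\tilde I)}^{1/4}\Delta_{\A_0(I_\k)}^{-1/4}$ for the single-net operator acting on $\H_0$, the first ingredient is the quantitative estimate behind the distal split property: since $\ell(\tilde I,I_\k)>s_\k$, the operator $\Theta_\k$ is trace class, it fixes $\Omega_0$, and its restriction $\Theta_\k^\circ$ to $\H_0^\circ=\H_0\ominus\CC\Omega_0$ satisfies $\|\Theta_\k^\circ\|_1\le \Tr e^{-s_\k L_{0,0}}-1=:c_\k$, the constant $1$ being the vacuum contribution coming from $L_{0,0}\Omega_0=0$; this is the content of \cite[Section~6]{BDL07}.

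The key step is to identify the operator $\Delta_\M^{1/4}\Delta_\N^{-1/4}$ on the free product Hilbert space $\H$. By the modular formula \eqref{Eq2.5} we have $\Delta_\M^{it}=\bigstar_{\k\in K}\Delta_{\A_0(\tilde I)}^{it}$ and $\Delta_\N^{it}=\bigstar_{\k\in K}\Delta_{\A_0(I_\k)}^{it}$ on the \emph{same} $\H$. Since every power of $\Delta_{\A_0(\tilde I)}$ and of $\Delta_{\A_0(I_\k)}$ fixes $\Omega_0$ and preserves $\H_0^\circ$, the $\bigstar$-operation is multiplicative on such operators word by word, which after analytic continuation in $t$ should yield
\[
\Delta_\M^{1/4}\Delta_\N^{-1/4}=\bigstar_{\k\in K}\Theta_\k .
\]
The hard part will be exactly this identity: each intermediate factor $\Delta^{\pm 1/4}$ is unbounded, so I would justify the continuation and the compatibility of domains by observing that on every reduced word $\H^\circ_{\kappa_1}\botimes\cdots\botimes\H^\circ_{\kappa_n}$ the composite acts as $\Theta_{\kappa_1}^\circ\otimes\cdots\otimes\Theta_{\kappa_n}^\circ$ with each $\Theta_{\kappa_j}^\circ$ \emph{bounded} (indeed trace class), so that no domain obstruction survives; this is the step requiring the most care.

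Granting the identity, I would compute the trace norm by summing over reduced words. On $\H^\circ_{\kappa_1}\botimes\cdots\botimes\H^\circ_{\kappa_n}$ the operator acts as the tensor product $\Theta_{\kappa_1}^\circ\otimes\cdots\otimes\Theta_{\kappa_n}^\circ$, whose trace norm factorizes as $\prod_{j}\|\Theta_{\kappa_j}^\circ\|_1\le\prod_j c_{\kappa_j}$; adding the vacuum contribution $1$ from $\CC\Omega$ gives
\[
\big\|\Delta_\M^{1/4}\Delta_\N^{-1/4}\big\|_1\le 1+\sum_{n\ge1}\ \sum_{\substack{\kappa_1,\dots,\kappa_n\\ \kappa_j\ne\kappa_{j+1}}}\ \prod_{j=1}^{n} c_{\kappa_j}\le 1+\sum_{n\ge1}\Big(\sum_{\k\in K}c_\k\Big)^{n}=\frac{1}{1-\sum_{\k}c_\k}<\infty,
\]
where dropping the constraint $\kappa_j\ne\kappa_{j+1}$ only adds nonnegative terms and the geometric series converges precisely by the hypothesis $\sum_\k c_\k=\sum_\k(\Tr e^{-s_\k L_{0,0}}-1)<1$.

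Thus $\N\subset\M$ satisfies the $L^2$-nuclearity condition, hence is split by \cite[Proposition~5.3]{BDL07} together with \cite[Propositions~1.1,~2.3]{BDL90}. Finally, $\M$ and $\N$ are type~III$_1$ factors, being free products of the type~III$_1$ factors $\A_0(\tilde I)$ and $\A_0(I_\k)$ with respect to the faithful normal vacuum states \cite{Ueda11-1}, and $\H$ is separable since $K$ is countable; therefore a split inclusion of type~III factors on a separable Hilbert space has $\N'\cap\M\neq\CC\1$, which completes the proof. As indicated, the only genuinely delicate point is the operator identity $\Delta_\M^{1/4}\Delta_\N^{-1/4}=\bigstar_\k\Theta_\k$ and the accompanying tensor factorization of its trace norm; everything else is a geometric-series estimate and an appeal to previously established facts.
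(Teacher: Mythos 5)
Your proposal is correct and follows essentially the same route as the paper: verify $L^2$-nuclearity of $\N\subset\M$ by writing $\Delta_\M^{1/4}\Delta_\N^{-1/4}$ componentwise on the free product Hilbert space as tensor products of the single-net operators $(\Delta_{0,\tilde I}^\circ)^{1/4}(\Delta_{0,I_\k}^\circ)^{-1/4}$, bound the trace norm by the geometric series $\sum_n \epsilon^n$ with $\epsilon=\sum_\k(\Tr e^{-s_\k L_{0,0}}-1)<1$, and conclude splitness and hence a nontrivial (type III) relative commutant. Your explicit flagging of the domain/identification issue for the unbounded factors is a point the paper glosses over, but the argument is the same.
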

\begin{proof}
 This time we invoke the $L^2$-nuclearity condition, namely, we are going to prove
 that $T_{\M,\N} := \Delta_\M^\frac14\Delta_\N^{-\frac14}$ is of trace class,
 then the inclusion $\N \subset \M$ is split \cite[Propositions 5.2, 5.3]{BDL07}.
 
 Let $\Delta_{0,\tilde I}, \Delta_{0,I_\k}$ be the modular operators of $\A_0(\tilde I), \A_0(I_\k)$ with respect to $\Omega_0$,
 respectively. As both $\Delta_{0,\tilde I}^{it}, \Delta_{0,I_\k}^{it}$ leave $\Omega_0$ invariant,
 we can naturally define $\Delta_{0,\tilde I}^\circ, \Delta_{0,I_\k}^\circ$, their restriction on $\H_0^\circ$.
 We know from \cite[Proposition 3.1]{BDL07} that
 $\|\Delta_{0,\tilde I}^{\frac14}\Delta_{0,I_\k}^{-\frac14}\|_1 = \|e^{-s_\k L_{0,0}}\|_1$,
 where $\|\cdot\|_1$ denotes the trace norm.
 Again, the eigenvalue $0$ of $L_{0,0}$ has multiplicity $1$, and by assumption
 $1 > \sum_\k (\|e^{-s_\k L_{0,0}}\|_1 - 1) = \sum_\k (\Tr e^{-s_\k L_{0,0}} - 1)
 = \sum_\k (\|(\Delta_{0,\tilde I}^\circ)^\frac14(\Delta_{0,I_\k}^\circ)^{-\frac14}\|_1) =: \epsilon$.
 
  From their componentwise expressions, we know what $T_{\M, \N}$ looks like:
 \[
  T_{\M,\N} = \Delta_\M^\frac14\Delta_\N^{-\frac14}
  = \1 \oplus \bigoplus_{n\geq 1} \bigoplus_{\underset{1 \leq j \leq n-1}{\kappa_j \neq \kappa_{j+1}}}
  (\Delta_{0,\tilde I}^\circ)^\frac14(\Delta_{0, I_{\k_1}}^\circ)^{-\frac14}\botimes\cdots\botimes(\Delta_{0,\tilde I}^\circ)^\frac14(\Delta_{0, I_{\k_n}}^\circ)^{-\frac14}
 \]
 Its trace norm can be estimated recursively in $n$:
 the contribution from $n=1$ is exactly $\epsilon$ above.
 The contribution from $n+1$ can be dominated by the contribution from $n$
 multiplied by $\sum_\k (\Delta_{0,\tilde I}^\circ)^\frac14(\Delta_{0, I_{\k}}^\circ)^{-\frac14}$, say, from the right
 (namely, any sequence $\k_1,\cdots \k_{n+1}$ can be obtained from a sequence $\k_1,\cdots \k_n$ by adding an element $\k_{n+1}$
 to the right).
 Therefore, the total sum is less than $1 + \sum_n \epsilon^n = \frac1{1-\epsilon} < \infty$.
\end{proof}

We emphasize that these Theorems contrast quite sharply with Proposition \ref{P5}. 

\section{Free products of Borchers triples}\label{borchers}

As we recalled, to any two-dimensional Haag-Kastler net, one can associate
a two-dimensional Borchers triple. The simplest one is the so-called the free field net
which we will review below, but there are more important interacting nets
ranging from the constructive QFT \cite{GJ87} to the recent operator-algebraic constructions
\cite{Lechner08, Tanimoto14-1, AL16}.
Note also that there are two-dimensional nets with trivial double cone algebras up to a given size \cite{LL15},
although they are subnets of the free field net.

Given a family of Borchers triples, one can promote it to
the free product. By arguments parallel to those of Proposition \ref{pr:moebius} and Theorem \ref{T6},
it is straightforward to show the following.
\begin{proposition}\label{pr:borchers}
Let $\{(\M_\k, T_\k, \Omega_\k)\}_{\k \in \K}$ be a family of two-dimensional Borchers triples on Hilbert spaces $\{\H_\k\}_{\k\in K}$.
Set
\[
(\H,\Omega) := \bigstar_{\k\in K} (\H_\k,\Omega_\k), \quad 
(\M,\omega) := \bigstar_{\k\in K} (\M_\k,\omega_\k) \quad
 T(a) := \bigstar_{\k\in K} T_\k(a). 
\]
Then $(\M,T,\Omega)$ is again a two-dimensional Borchers triple.
Furthermore, if $K$ is an infinite index set, and all these Borchers triples are identical,
then the relative commutant $\Ad T(a)(\M)'\cap \M$ is trivial for any $a \in W_{\mathrm R}$.
\end{proposition}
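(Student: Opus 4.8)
The plan is to prove the two assertions of Proposition~\ref{pr:borchers} by reducing each to a result already established earlier in the excerpt. The first assertion---that $(\M, T, \Omega)$ is again a two-dimensional Borchers triple---is largely formal. First I would verify that $T(a) = \bigstar_{\k\in K} T_\k(a)$ is a well-defined unitary representation of $\RR^2$ on $\H$: since each $T_\k(a)\Omega_\k = \Omega_\k$, the operators $T_\k(a)$ leave $\CC\Omega_\k$ and hence $\H_\k^\circ$ invariant, so $\bigstar_{\k} T_\k(a)$ makes sense and the group law is inherited componentwise. The positivity of the joint spectrum in $\overline V_+$ follows exactly as in the proof of Proposition~\ref{pr:moebius}: on each subspace $\H^\circ_{\kappa_1}\botimes\cdots\botimes\H^\circ_{\kappa_n}$ the generators of $T$ act as the sum $P_{\kappa_1}\otimes\1\otimes\cdots + \cdots + \1\otimes\cdots\otimes P_{\kappa_n}$ of the individual (positive-spectrum) generators, and a sum of commuting operators with spectra in $\overline V_+$ again has spectrum in $\overline V_+$ since $\overline V_+$ is a convex cone. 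The covariance $\Ad T(a)(\M)\subset\M$ for $a\in W_{\mathrm R}$ reduces, via $T(a)\lambda_\k(x)T(a)^* = \lambda_\k(T_\k(a)xT_\k(a)^*)$ and the definition $\M = \bigvee_\k\lambda_\k(\M_\k)$, to the corresponding inclusion for each $\M_\k$. Cyclicity and separateness of $\Omega$ for $\M$ come from \eqref{Eq2.3}, and uniqueness of the $T(a)$-invariant vector follows from the explicit direct-sum form of the generators, precisely as in Proposition~\ref{pr:moebius}.

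For the second assertion I would invoke Proposition~\ref{P5}, the general irreducibility result for free products of infinitely many identical inclusions. Fix $a\in W_{\mathrm R}$ and set $\N_0 := \Ad T_0(a)(\M_0)$, the once-shifted algebra inside $\M_0$. Since all the triples are identical copies of a fixed $(\M_0, T_0, \Omega_0)$, I would check that $\Ad T(a)(\M) = \bigvee_\k \lambda_\k(\Ad T_0(a)(\M_0)) = \bigvee_\k\lambda_\k(\N_0)$; this is the key compatibility, and it holds because $T(a)$ restricts to $T_\k(a)$ on each $\lambda_\k(\B(\H_\k))$, so conjugation by $T(a)$ commutes with the embeddings $\lambda_\k$. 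Thus $\Ad T(a)(\M)\subset\M$ is exactly the free product inclusion $\N\subset\M$ of infinitely many copies of $\N_0\subset\M_0$ in the sense of Proposition~\ref{P5}. That proposition then yields $\N'\cap\M = \CC\1$, which is precisely $\Ad T(a)(\M)'\cap\M = \CC\1$.

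The one nontrivial hypothesis of Proposition~\ref{P5} that must be discharged is that $\N_0 = \Ad T_0(a)(\M_0)$ is a \emph{nontrivial} von Neumann subalgebra of $\M_0$, and that $\Omega_0$ is cyclic and separating for both, so that the free product $\N$ is genuinely defined on the same free product Hilbert space with state $\omega$. Separateness of $\Omega_0$ for $\N_0$ is automatic since $\N_0$ is a unitary conjugate of $\M_0$; cyclicity of $\Omega_0$ for $\N_0$ holds because $T_0(a)\Omega_0 = \Omega_0$ and $\Omega_0$ is cyclic for $\M_0$, so $\N_0\Omega_0 = T_0(a)\M_0 T_0(a)^*\Omega_0 = T_0(a)\M_0\Omega_0$ is dense. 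Nontriviality $\N_0\neq\CC\1$ follows since $\M_0$ is nontrivial and $\Ad T_0(a)$ is an automorphism. I expect the main obstacle---or at least the only point requiring genuine care---to be confirming that the abstract hypotheses of Proposition~\ref{P5} truly apply here, in particular that the free product $\bigstar_\k(\N_0\subset\M_0,\omega_0)$ built intrinsically in that proposition coincides, as a concrete inclusion of algebras on $\H$, with the shifted inclusion $\Ad T(a)(\M)\subset\M$ arising from the Borchers-triple free product; this is exactly the identity $\Ad T(a)(\M) = \bigvee_\k\lambda_\k(\N_0)$ flagged above, and everything else is routine.
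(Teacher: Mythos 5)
Your argument is correct and is exactly the route the paper intends: it gives no separate proof of Proposition~\ref{pr:borchers}, stating only that it follows ``by arguments parallel to those of Proposition~\ref{pr:moebius} and Theorem~\ref{T6}'', and your write-up supplies precisely those parallels (the componentwise verification of the Borchers-triple axioms as in Proposition~\ref{pr:moebius}, and the identification $\Ad T(a)(\M)=\bigvee_\k\lambda_\k(\Ad T_0(a)(\M_0))$ feeding into Proposition~\ref{P5}). No gaps.
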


Yet, if the index set $K$ is finite, or if $(\M_\k, T_\k, \Omega_\k)$'s are not identical,
$\Ad T(a)(\M)'\cap \M$ might be nontrivial, and if so, one could construct a Haag-Kastler net.

\paragraph{Two-dimensional Borchers triples of the massive free field.}
In order to study closely the (possible) physical properties of the resulting Borchers triple,
we present explicitly the simplest Borchers triple, coming from the free massive field.
Our notations follow those of \cite{Lechner03} (except for $z(\psi)$, which is antilinear in $\psi$ in our notation).
The one-particle Hilbert space with the mass $m>0$
is given by $\H_1 := L^2(\RR, d\theta)$ and 
the translation group acts by $(T_1(a)\psi)(\theta) = e^{i p_m(\theta)\cdot a}\psi(\theta)$,
where $p_m(\theta) := (m\cosh\theta, m\sinh\theta)$,
and for the Lorentz metric we use the convention $a\cdot b = a_0b_0 - a_1b_1$.
We introduce
the symmetrized Hilbert space $\H := \bigoplus P_n \H_1^{\otimes n}$,
where $P_n$ is the projection onto the symmetric subspace.

Let $z^\dagger$ and $z$ be the creation and annihilation operator,
namely,
$(z^\dagger(\psi)\Psi)_n = \sqrt{n}P_n(\psi\otimes \Psi_{n-1})$
for vectors with finite particle number $\Psi = (\Psi_n), \Psi_n \in P_n\H^{\otimes n}$.
The annihilation operator is the adjoint $z(\psi) = z^\dagger(\psi)^*$.
The (real) free field $\phi$ is defined by
\[
\phi(f) := z^\dagger(f^+) + z(J_1f^-), \qquad f^\pm(\theta) = \frac{1}{2\pi}\int d^2a\,f(a)e^{\pm i p_m(\theta)\cdot a},
\]
where $f$ is a test function of the Schwartz class $\mathscr{S}(\RR^2)$ and $J_1\psi(\theta) = \overline{\psi(\theta)}$.
Our von Neumann algebra is
\[
\M := \{e^{i\phi(f)}: \supp f \subset W_{\mathrm R}\}''.
\]
The translation on the full space is the second quantized representation
$T(a) := \Gamma(T_1(a))$ and there is the Fock vacuum vector $\Omega \in \H$.
Then $(\M, T, \Omega)$ is a Borchers triple and satisfies the modular nuclearity condition
for any $a \in W_{\mathrm R}$ \cite[Section 4]{BL04}.

\paragraph{Attempts towards the intersection.}
Let $(\M_\k, T_\k, \Omega_\k), \k = 1,2$ be two copies of the free field Borchers triple.
We will put the index $\k$ to all the objects to distinguish these copies.
We saw in Section \ref{moebius} that the trace class condition and the $L^2$-nuclearity condition in M\"obius covariant nets
are preserved under finite free products, at the cost of a finite distance.
Let us take two copies of the massive free field and consider their free product.
Below are some standard ways to prove the nontriviality of the intersection
$\Ad T(a)(\M)'\cap \M$, which unfortunately do not work.
\begin{itemize}
 \item The trace class property does not make sense in two dimensions.
 \item The $L^2$-nuclearity condition amounts to compute the operator $\Delta_\M^\frac14 \Delta_{\N}^{-\frac14}$,
 where $\N = \Ad T(a)(\M)$.
In the two-dimensional situation this is a multiplication operator (of $\theta$'s), hence it cannot be nuclear.
 \item At first sight, there could be two ways to attempt to prove modular nuclearity.
 One is, by noting that the tensor product net satisfies the modular nuclearity condition,
 to consider the map $\lambda_{\k_1}(x_1)\lambda_{\k_2}(x_2)\cdots\lambda_{\k_n}(x_n) \longmapsto x_1\otimes x_2\otimes\cdots \otimes x_n$.
 We are unable to carry this out because the map cannot be shown to be continuous in norm.
 The other is to mimic the proof of \cite{Lechner08}. This is obstructed as there is no symmetry
 in, e.g.\! vectors such as $\lambda_{\k_1}(z^\dagger_1(f^+))\lambda_{\k_2}(z_2^\dagger(g^+))\Omega$ considered as a two-variable function, and the core analytic continuation
 cannot be obtained (c.f.\! \cite[Section 4]{Lechner08}).
 \item Note that $\M$ is full and hence not hyperfinite \cite[Theorem 4.1]{Ueda11-1}.
 Therefore, when we consider the inclusions $\Ad T(a)(\M)\subset \M$, the split property must fail for small $a$.
 \item Even though each $(\M_\k, T_\k, \Omega_\k)$ has the split property,
 namely there are type I factors $\R_{\k,a}$ such that $\Ad T_\k(a)(\M_\k)\subset \R_{\k,a}\subset \M$, one cannot directly obtain an intermediate type I factor from those $\R_{\k,a}$, c.f.\! \cite[Section 4.2]{Tanimoto14-1}, since no type I free product factors can arise (see \cite[Theorem 4.1]{Ueda11-1}).
 
\end{itemize}

\paragraph{Two-particle S-matrix}
Although we do not know whether there is a corresponding Haag-Kastler net,
it is still possible to define the two-particle S-matrix of a Borchers triple.
We follow the approach of \cite[Section 3]{BBS01}\cite[Section 4]{Lechner03}. For simplicity, we take again
two copies of the free field Borchers triple $(\M_\k, T_\k, \Omega_\k), \k = 1,2$.
As the free product von Neumann algebra is generated by the components $\M_\k = \{e^{i\phi_\k(f)}: \supp f \subset W_{\mathrm R}\}''$,
we can construct the asymptotic fields using unbounded fields $\phi_\k$ as follows.

For a test function $f$, its velocity support is defined by $\Gamma(f) := \{(1,p_1/\omega(p))\in \RR^2: p\in\supp \tilde f\}$,
where $\tilde f(p) := \frac1{2\pi}\int d^2a\,e^{ip\cdot a}f(a)$ and $\omega(p) = \sqrt{p_1^2 + m^2}$.
We say $\Gamma(g) \prec \Gamma(f)$ if $\Gamma(f) - \Gamma(g) \subset W_{\mathrm R}$.
If $\Gamma(g)\prec\Gamma(f)$, then it follows that $\supp f^+ - \supp g^+ \subset \RR_+$.
We also introduce a family of time-depending functions
\[
 f_t(a) := \frac1{2\pi}\int d^2p\,\tilde f(p)e^{i(p_0-\omega(p)t - p\cdot a)}.
\]

Let us denote $f^+_\k = \phi_\k(f)\Omega$, by identifying $\H_\k$ as subspaces $\CC\Omega \oplus \H_\k^\circ$ of $\H$
and $\phi'_\k(g) := Z\phi_\k(g)Z$.
These operators can act on $\H$ naturally, by extending the actions
$\l_\k$ and $\rho_{\k'}$ to unbounded operators.
It holds that $\phi_\k(f)\Omega = f^+_\k = f^+_{\k,t} = \phi_\k(f_t)\Omega$
and $\phi'_\k(g)\Omega = g^+_\k = g^+_{\k,t} = \phi_\k(g_t)\Omega$.

Let $f$ and $g$ be test functions as above and we assume further that
their Fourier transforms $\tilde f, \tilde g$ are supported in some neighborhoods of the mass shell
$\{(\omega(p_1), p_1) \in \RR^2\}$.
Now we can define the outgoing two-particle states by
\begin{align*}
 (g_{\k'}^+\times f_\k^+)_\tout &:= \lim_{t\to \infty} \phi'_{\k'}(g_t)\phi_\k(f_t)\Omega, \\
 (f_\k^+\times g_{\k'}^+)_\tout &:= \lim_{t\to \infty} \phi_\k(f_t)\phi'_{\k'}(g_t)\Omega.
\end{align*}
As $\Gamma(g) \prec \Gamma(f)$, $\phi_\k(f)$ and $\phi'_{\k'}(g)$ almost commute as $t \to \infty$
(see \cite[Section 3]{BBS01} for the precise statement) and we get the bosonic statistics
$(g_{\k'}^+\times f_\k^+)_\tout =  (f_\k^+\times g_{\k'}^+)_\tout$.
The incoming two-particle states are similar: by exchanging the roles of $\phi_\bullet$ and $\phi'_\bullet$ and
by noting that
$\phi'_\k(f)$ and $\phi_{\k'}(g)$ almost commute as $t \to -\infty$:
\begin{align*}
 (g_{\k'}^+\times f_\k^+)_\tin &:= \lim_{t\to -\infty} \phi_{\k'}(g_t)\phi'_\k(f_t)\Omega, \\
 (f_\k^+\times g_{\k'}^+)_\tin &:= \lim_{t\to -\infty} \phi'_\k(f_t)\phi_{\k'}(g_t)\Omega,
\end{align*}
and the two-particle S-matrix is defined by \footnote{Actually there are several definitions of S-matrix.
Our S-matrix is a partial isometry on the Hilbert space of the model, while others are defined
on the symmetric Fock space, see e.g.\! \cite[Section6]{Lechner08}.
As the incoming and outgoing two-particle spaces are orthogonal in our models, the latter definition would give
the zero operator.}
\[
 S(g_{\k'}^+\times f_\k^+)_\tout = (g_{\k'}^+\times f_\k^+)_\tin.
\]

It is easy to explicitly compute $S$.
Indeed, when $\k = \k'$, the operators act on $\H_\k$ and $Z$ has no effect, therefore,
the field operators are just the free field and we obtain $(g_{\k}^+\times f_\k^+)_\tout = (g_{\k}^+\times f_\k^+)_\tin$.

The case $\k=1, \k' = 2$ is more interesting.
In the definition of $(g_{\k'}^+\times f_\k^+)_\tout$ above,
the right-hand side is the product of operators $\phi_\k(f), \phi'_{\k'}(g)$,
acting from the left and the right, respectively.
Note also that $\<\Omega_\k, \phi_\k(f)\Omega_\k\> = 0 = \<\Omega_{\k'}, \phi_{\k'}(g)\Omega_{\k'}\>$.
Therefore, we simply get
\[
 (g_{\k'}^+\times f_\k^+)_\tout = \phi'_{\k'}(g)\phi_\k(f)\Omega
 = f^+\otimes g^+ \in \H_{\k,1}\otimes \H_{\k',1} \subset \H_{\k}^\circ\otimes\H_{\k'}^\circ,
\]
where $\H_{\k,1}, \H_{\k',1}$ are the one-particle Hilbert spaces. 
Note that $\supp f^+ - \supp g^+ \subset \RR_+$, and as $\H_{k,1} = \H_{\k',1} = L^2(\RR)$,
hence they can be regarded as a two-variable function $\psi(\theta_1, \theta_2)$
with $\supp \psi \subset \{(\theta_1, \theta_2)\in \RR^2: \theta_1 \ge \theta_2\}$.
It is also immediate that any $L^2$-function with such support can be obtained
as the limit of linear combinations of scattering states.
As for the incoming state, we exchange the roles of $\phi_\bullet$ and $\phi'_\bullet$, and
\[
 (g_{\k'}^+\times f_\k^+)_\tin = \lim_{t\to -\infty} \phi_{\k'}(g_t)\phi'_\k(f_t)\Omega
 = g^+ \otimes f^+ \in \H_{\k',1}\otimes \H_{\k,1},
\]
again with $\supp f^+ - \supp g^+ \subset \RR_+$. Such two-particle scattering states
span the subspace of two-variable functions $\psi(\theta_1,\theta_2),
\supp \psi \subset \{(\theta_1,\theta_2)\in \RR^2: \theta_1 \le \theta_2\}$.
The action of $S$ is then simply the extension of $S(g^+\otimes f^+) = f^+\otimes g^+$.

To summarize,
\begin{align*}
 S_{\k,\k}& = \1 \mbox{ on } \H_{\k,2} \subset \H_\k^\circ,  & \k = 1,2,\\
 S_{\k,\k'}&: \H_{\k,1}\otimes \H_{\k',1} \ni \psi(\theta_1,\theta_2) \longmapsto \psi(\theta_2,\theta_1)
 \in \H_{\k',1}\otimes \H_{\k,1}, \\
 &\qquad \qquad \qquad \mbox{ where }
 \supp \psi(\theta_1,\theta_2)  \subset \{(\theta_1, \theta_2)\in \RR^2: \theta_1 \ge \theta_2\}, &\k \neq \k'.
\end{align*}
Especially, the S-matrix is {\it nontrivial} and {\it not asymptotically complete}, 
i.e.\! not defined on the whole two-particle space $\H_{1,2}\oplus\H_{2,2}\oplus \H_{\k,1}\otimes \H_{\k',1} \oplus \H_{\k',1}\otimes \H_{\k,1}$,
but only on a proper subspace of it.

\subsubsection*{Acknowledgement}
We thank Detlev Buchholz, Gandalf Lechner and Mih\'aly Weiner for stimulating discussions.

Y.T.\! would like to thank the Isaac Newton Institute for Mathematical Sciences for
support and hospitality during the programme ``Operator algebras: subfactors and their applications'',
supported by EPSRC Grant Number EP/K032208/1 and partially by a grant from the Simons Foundation,
where part of this work was undertaken.

\appendix
\section{Absence of conditional expectation between local algebras in M\"obius covariant nets}\label{noexpectation}
Let $(\A, U, \Omega)$ be a local M\"obius covariant net on $S^1$.
It follows that $\A(I)$'s are factors (from the uniqueness of the vacuum).
Let $I_1 \subsetneq I_2$ be an inclusion of intervals on $S^1$. We show that there is
no normal faithful conditional expectation from $\A(I_2)$ onto $\A(I_1)$.

To show this by contradiction, let $E$ be such a conditional expectation. Take a faithful normal state $\psi$ on $\A(I_2)$ which is invariant under $E$
(one can take a faithful normal state on $\A(I_1)$ and extend it to $\A(I_2)$ by $E$).
We may assume that it is a vector state represented by a cyclic and separating vector $\Psi$ thanks to the fact that every representation of a type III factor is standard. 

Now, by Takesaki's theorem \cite[Theorem IX.4.2]{TakesakiII}, $\K := \overline{\A(I_1)\Psi}$ is a proper subspace.
As $\A(I_1)$ is a factor, and since $\A(I_1)\vee (\A(I_1)'\cap \A(I_2)) = \A(I_2)$
(conormality, \cite[Theorem 1.6]{GLW98}), 
$\A(I_1)$ and $\A(I_1)'\cap \A(I_2)$ are in a position of tensor product:
this follows from the fact that $E(\A(I_1)'\cap \A(I_2)) = \CC\1$ by factoriality,
and that $\A(I_1)\vee (\A(I_1)'\cap \A(I_2)) = \A(I_2)$ spans the whole Hilbert space from $\Psi$.
This is impossible, because $\A(I_1)'\cap \A(I_2)$ contains another local algebra $\A(I_3)$ where $I_3$ is an interval
which has a boundary point in common with $I_1$
($I_1$ and $I_2$ can have either one or no point of boundary in common, and either case
one can find $I_3$). On the other hand, it is well known (e.g.\! \cite[P292, footnote]{Buchholz74})
that $\A(I_1)$ and $\A(I_3)$ are not in the position of tensor product (split),
hence we got a contradiction.

Even in higher dimensions, if one assumes the split property and that local algebras are of type III$_1$,
which are quite generic (see e.g.\! \cite{BDF87}), the following argument excludes the possibility that
there is a conditional expectation from $\A(O_2)$ onto $\A(O_1)$ for all local regions $O_1 \subset O_2$.
Indeed, by composing two such expectations, we may assume that $O_1 \Subset O_2$,
and now there is an intermediate type I factor $\R$ such that $\A(O_1) \subset \R \subset \A(O_2)$.
Now the expectation restricts to $\R$, which is impossible by a similar argument as above,
since $\R = \A(O_1) \vee (\R \cap\A(O_1)')$ because $\R$ is a type I factor, while $\A(O_1)$ is of type III$_1$.

{\small

\begin{thebibliography}{CLTW12}

\bibitem[AL16]{AL16}
Sabina Alazzawi and Gandalf Lechner.
\newblock Inverse scattering and locality in integrable quantum field theories.
\newblock 2016.
\newblock \url{https://arxiv.org/abs/1608.02359}.

\bibitem[AZ05]{AZ05}
Huzihiro Araki and L{\'a}szl{\'o} Zsid{\'o}.
\newblock Extension of the structure theorem of {B}orchers and its application
  to half-sided modular inclusions.
\newblock {\em Rev. Math. Phys.}, 17(5):491--543, 2005.
\newblock \url{https://arxiv.org/abs/math/0412061}.

\bibitem[Bar95]{Barnett95}
Lance Barnett.
\newblock Free product von {N}eumann algebras of type {${\rm III}$}.
\newblock {\em Proc. Amer. Math. Soc.}, 123(2):543--553, 1995.
\newblock \url{http://www.ams.org/journals/proc/1995-123-02/S0002-9939-1995-1224611-7/S0002-9939-1995-1224611-7.pdf}.

\bibitem[BBS01]{BBS01}
Hans-J{\"u}rgen Borchers, Detlev Buchholz, and Bert Schroer.
\newblock Polarization-free generators and the {$S$}-matrix.
\newblock {\em Comm. Math. Phys.}, 219(1):125--140, 2001.
\newblock \url{https://arxiv.org/abs/hep-th/0003243}.

\bibitem[BDF87]{BDF87}
D.~Buchholz, C.~D'Antoni and K.~Fredenhagen.
\newblock The universal structure of local algebras.
\newblock {\em Comm. Math. Phys.}, 111(1):123--135, 1987.
\newblock \url{http://projecteuclid.org/euclid.cmp/1104159470}.


\bibitem[BDL90]{BDL90}
Detlev Buchholz, Claudio D'Antoni, and Roberto Longo.
\newblock Nuclear maps and modular structures. {I}.\ {G}eneral properties.
\newblock {\em J. Funct. Anal.}, 88(2):233--250, 1990.
\newblock \url{https://dx.doi.org/10.1016/0022-1236(90)90104-S}.

\bibitem[BDL07]{BDL07}
Detlev Buchholz, Claudio D'Antoni, and Roberto Longo.
\newblock Nuclearity and thermal states in conformal field theory.
\newblock {\em Communications in Mathematical Physics}, 270(1):267--293, 2007.
\newblock \url{https://arxiv.org/abs/math-ph/0603083}.

\bibitem[BL04]{BL04}
Detlev Buchholz and Gandalf Lechner.
\newblock Modular nuclearity and localization.
\newblock {\em Ann. Henri Poincar{\'e}}, 5(6):1065--1080, 2004.
\newblock \url{https://arxiv.org/abs/math-ph/0402072}.

\bibitem[BLM11]{BLM11}
Henning Bostelmann, Gandalf Lechner, and Gerardo Morsella.
\newblock Scaling limits of integrable quantum field theories.
\newblock {\em Rev. Math. Phys.}, 23(10):1115--1156, 2011.
\newblock \url{https://arxiv.org/abs/1105.2781}.

\bibitem[Bor92]{Borchers92}
H.-J. Borchers.
\newblock The {CPT}-theorem in two-dimensional theories of local observables.
\newblock {\em Comm. Math. Phys.}, 143(2):315--332, 1992.
\newblock \url{http://projecteuclid.org/euclid.cmp/1104248958}.

\bibitem[BT15]{BT15}
Marcel Bischoff and Yoh Tanimoto.
\newblock Integrable {QFT} and {L}ongo-{W}itten endomorphisms.
\newblock {\em Ann. Henri Poincar\'e}, 16(2):569--608, 2015.
\newblock \url{https://arxiv.org/abs/1305.2171}.

\bibitem[Buc74]{Buchholz74}
Detlev Buchholz.
\newblock Product states for local algebras.
\newblock {\em Comm. Math. Phys.}, 36:287--304, 1974.
\newblock \url{http://projecteuclid.org/euclid.cmp/1103859773}.

\bibitem[CLTW12]{CLTW12-2}
Paolo Camassa, Roberto Longo, Yoh Tanimoto, and Mih{\'a}ly Weiner.
\newblock Thermal {S}tates in {C}onformal {QFT}. {II}.
\newblock {\em Comm. Math. Phys.}, 315(3):771--802, 2012.
\newblock \url{https://arxiv.org/abs/1109.2064}.

\bibitem[DDM14]{DDM14}
Yoann Dabrowski, Kenneth~J. Dykema, and Kunal Mukherjee.
\newblock The simplex of tracial quantum symmetric states.
\newblock {\em Studia Math.}, 225(3):203--218, 2014.
\newblock \url{https://arxiv.org/abs/1401.4692}.

\bibitem[DLR01]{DLR01}
Claudio D'Antoni, Roberto Longo, and Florin R{\u{a}}dulescu.
\newblock Conformal nets, maximal temperature and models from free probability.
\newblock {\em J. Operator Theory}, 45(1):195--208, 2001.
\newblock \url{https://arxiv.org/abs/math/9810003}.

\bibitem[DM16]{DM16}
Ken Dykema and Kunal Mukherjee.
\newblock KMS quantum symmetric states.
\newblock 2016.
\newblock \url{https://arxiv.org/abs/1609.01225}.

\bibitem[Dyk94]{Dykema94}
Kenneth~J. Dykema.
\newblock Factoriality and {C}onnes' invariant {$T({\mathscr M})$} for free
  products of von {N}eumann algebras.
\newblock {\em J. Reine Angew. Math.}, 450:159--180, 1994.
\newblock \url{https://arxiv.org/abs/funct-an/9302005}.

\bibitem[Flo98]{Florig98}
Martin Florig.
\newblock On {B}orchers' theorem.
\newblock {\em Lett. Math. Phys.}, 46(4):289--293, 1998.
\newblock \url{https://dx.doi.org/10.1023/A:1007546507392}.

\bibitem[GJ87]{GJ87}
James Glimm and Arthur Jaffe.
\newblock {\em Quantum physics}.
\newblock Springer-Verlag, New York, second edition, 1987.
\newblock \url{https://books.google.com/books?id=VSjjBwAAQBAJ}.

\bibitem[GLW98]{GLW98}
D.~Guido, R.~Longo, and H.-W. Wiesbrock.
\newblock Extensions of conformal nets and superselection structures.
\newblock {\em Comm. Math. Phys.}, 192(1):217--244, 1998.
\newblock \url{https://arxiv.org/abs/hep-th/9703129}.

\bibitem[HU16]{HU16}
Cyril Houdayer and Yoshimichi Ueda.
\newblock Asymptotic structure of free product von {N}eumann algebras.
\newblock {\em Math. Proc. Cambridge Philos. Soc.}, 161(3):489--516, 2016.
\newblock \url{http://dx.doi.org/10.1017/S0305004116000396}.

\bibitem[KW01]{KW01}
Ralf K\"ahler and Hans-Werner Wiesbrock.
\newblock Modular theory and the reconstruction of four-dimensional quantum
  field theories.
\newblock {\em J. Math. Phys.}, 42(1):74--86, 2001.
\newblock \url{http://dx.doi.org/10.1063/1.1327597}.

\bibitem[Lec03]{Lechner03}
Gandalf Lechner.
\newblock Polarization-free quantum fields and interaction.
\newblock {\em Lett. Math. Phys.}, 64(2):137--154, 2003.
\newblock \url{https://arxiv.org/abs/hep-th/0303062}.

\bibitem[Lec08]{Lechner08}
Gandalf Lechner.
\newblock Construction of quantum field theories with factorizing
  {$S$}-matrices.
\newblock {\em Comm. Math. Phys.}, 277(3):821--860, 2008.
\newblock \url{https://arxiv.org/abs/math-ph/0601022}.

\bibitem[Lec15]{Lechner15}
Gandalf Lechner.
\newblock Algebraic constructive quantum field theory: Integrable models and
  deformation techniques.
\newblock In {\em Advances in Algebraic Quantum Field Theory}, pages 397--448.
  Springer International Publishing, New York-London-Paris, 2015.
\newblock \url{https://arxiv.org/abs/1503.03822}.

\bibitem[LL15]{LL15}
Gandalf Lechner and Roberto Longo.
\newblock Localization in nets of standard spaces.
\newblock {\em Comm. Math. Phys.}, 336(1):27--61, 2015.


\bibitem[Lon84]{Longo84}
R.~Longo.
\newblock Solution of the factorial {S}tone-{W}eierstrass conjecture. {A}n
  application of the theory of standard split {$W^{\ast} $}-inclusions.
\newblock {\em Invent. Math.}, 76(1):145--155, 1984.
\newblock \url{https://eudml.org/doc/143121}.

\bibitem[Lon87]{Longo87}
Roberto Longo.
\newblock Simple injective subfactors.
\newblock {\em Adv. in Math.}, 63(2):152--171, 1987.
\newblock \url{http://dx.doi.org/10.1016/0001-8708(87)90051-X}.

\bibitem[Lon01]{Longo01}
Roberto Longo.
\newblock Notes for a quantum index theorem.
\newblock {\em Comm. Math. Phys.}, 222(1):45--96, 2001.
\newblock \url{https://arxiv.org/abs/math/0003082}.

\bibitem[RS80]{RSI}
Michael Reed and Barry Simon.
\newblock {\em Methods of modern mathematical physics. {I}}.
\newblock Academic Press Inc., New York, second edition, 1980.
\newblock Functional analysis.
\newblock \url{https://books.google.com/books?id=rpFTTjxOYpsC}.

\bibitem[Tak03a]{TakesakiII}
M.~Takesaki.
\newblock {\em Theory of operator algebras. {II}}, volume 125 of {\em
  Encyclopaedia of Mathematical Sciences}.
\newblock Springer-Verlag, Berlin, 2003.
\newblock \url{https://books.google.com/books?id=-4GyR1VlQz4C}.

\bibitem[Tan12a]{Tanimoto12-2}
Yoh Tanimoto.
\newblock Construction of {W}edge-{L}ocal {N}ets of {O}bservables {T}hrough
  {L}ongo-{W}itten {E}ndomorphisms.
\newblock {\em Comm. Math. Phys.}, 314(2):443--469, 2012.
\newblock \url{https://arxiv.org/abs/1107.2629}.

\bibitem[Tan12b]{Tanimoto12-1}
Yoh Tanimoto.
\newblock Noninteraction of {W}aves in {T}wo-dimensional {C}onformal {F}ield
  {T}heory.
\newblock {\em Comm. Math. Phys.}, 314(2):419--441, 2012.
\newblock \url{https://arxiv.org/abs/1107.2662}.

\bibitem[Tan14]{Tanimoto14-1}
Yoh Tanimoto.
\newblock Construction of two-dimensional quantum field models through
  {L}ongo-{W}itten endomorphisms.
\newblock {\em Forum Math. Sigma}, 2:e7, 31, 2014.
\newblock \url{https://arxiv.org/abs/1301.6090}.

\bibitem[Ued11a]{Ueda11-1}
Yoshimichi Ueda.
\newblock Factoriality, type classification and fullness for free product von
  {N}eumann algebras.
\newblock {\em Adv. Math.}, 228(5):2647--2671, 2011.
\newblock \url{http://dx.doi.org/10.1016/j.aim.2011.07.017}.

\bibitem[Ued11b]{Ueda11-2}
Yoshimichi Ueda.
\newblock On type {$\rm III_1$} factors arising as free products.
\newblock {\em Math. Res. Lett.}, 18(5):909--920, 2011.
\newblock \url{http://dx.doi.org/10.4310/MRL.2011.v18.n5.a8}.

\bibitem[VDN92]{VDN92}
D.~V. Voiculescu, K.~J. Dykema, and A.~Nica.
\newblock {\em Free random variables}, volume~1 of {\em CRM Monograph Series}.
\newblock American Mathematical Society, Providence, RI, 1992.
\newblock \url{https://books.google.com/books?id=r7BJAwAAQBAJ}.

\bibitem[Voi85]{Voiculescu85}
Dan Voiculescu.
\newblock Symmetries of some reduced free product {$C^\ast$}-algebras.
\newblock In {\em Operator algebras and their connections with topology and
  ergodic theory ({B}u\c steni, 1983)}, volume 1132 of {\em Lecture Notes in
  Math.}, pages 556--588. Springer, Berlin, 1985.
\newblock \url{http://dx.doi.org/10.1007/BFb0074909}.

\bibitem[Wie93]{Wiesbrock93-1}
Hans-Werner Wiesbrock.
\newblock Half-sided modular inclusions of von-{N}eumann-algebras.
\newblock {\em Comm. Math. Phys.}, 157(1):83--92, 1993.
\newblock \url{https://projecteuclid.org/euclid.cmp/1104253848}.

\bibitem[Wie98]{Wiesbrock98}
Hans-Werner Wiesbrock.
\newblock Modular intersections of von {N}eumann algebras in quantum field
  theory.
\newblock {\em Comm. Math. Phys.}, 193(2):269--285, 1998.
\newblock
  \url{https://www-sfb288.math.tu-berlin.de/pub/Preprints/preprint193.ps.gz}.

\end{thebibliography}

\def\cprime{$'$} \def\polhk#1{\setbox0=\hbox{#1}{\ooalign{\hidewidth
  \lower1.5ex\hbox{`}\hidewidth\crcr\unhbox0}}} \def\cprime{$'$}

}

\end{document}